\newcites{appendix}{Reference}
\newtheoremstyle{newplain}
{4pt}
{4pt}
{\itshape}
{}
{\itshape\bf}
{.}
{.5em}
{}
\theoremstyle{newplain}
\newtheorem{theorem}{Theorem}[section]
\newtheorem{proposition}{Proposition}[section]
\newtheorem{lemma}{Lemma}[section]
\newtheorem{definition}{Definition}[section]
\newtheorem{assumption}{Assumption}[section]
\newtheoremstyle{newdefinition}
{4pt}
{4pt}
{}
{}
{\itshape\bf}
{.}
{.5em}
{}
\theoremstyle{newdefinition}
\newtheorem{vg}{Example}[section]
\newtheorem{remark}{Remark}[section]
\numberwithin{equation}{section}
\newcommand{\be}{\begin{equation}}
\newcommand{\ee}{\end{equation}}
\newcommand{\nee}{\nonumber\end{equation}}
\newcommand{\eel}[1]{\label{#1}\end{equation}}
\newcommand{\brmk}[1]{\begin{remark}\label{#1}\begin{em} }
\newcommand{\ermk}{ $\quad\triangleleft$\end{em}\end{remark}}
\newcommand{\bvg}[1]{\begin{vg}\label{#1}\begin{em} }
\newcommand{\evg}{ $\quad\triangleleft$\end{em}\end{vg}}
\begin{document}
\bibliographystyle{plainnat}
\setlength{\abovedisplayskip}{8pt}
\setlength{\belowdisplayskip}{8pt}
\setlength{\abovedisplayshortskip}{4pt}
\setlength{\belowdisplayshortskip}{8pt}

\begin{titlepage}
\title{{\bf{\Huge Long Term Risk: A Martingale Approach}}
\thanks{The authors thank Lars Peter Hansen (the co-editor) and the anonymous referees for their insightful comments and suggestions that helped improve the paper, and Jaroslav Borovicka, Peter Carr, Timothy Christensen (discussant) and Jose Scheinkman for stimulating discussions.  This paper is based on research supported by the grant CMMI-1536503 from the National Science Foundation.}}
\author{Likuan Qin\thanks{likuanqin2012@u.northwestern.edu}
}
\author{Vadim Linetsky\thanks{linetsky@iems.northwestern.edu}
}
\affil{\emph{Department of Industrial Engineering and Management Sciences}\\
\emph{McCormick School of Engineering and Applied Sciences}\\
\emph{Northwestern University}}
\date{\today\\
\emph{To appear in Econometrica}}

\end{titlepage}

\maketitle

\begin{abstract}
This paper extends the long-term factorization of the stochastic discount factor introduced and studied by \citet{alvarez_2005using} in discrete-time ergodic environments  and by \citet{hansen_2009} and \citet{hansen_2012} in Markovian environments
to general semimartingale environments.
The transitory component discounts at the stochastic rate of return on the long bond and is factorized into discounting at the long-term yield and a positive semimartingale that extends the principal eigenfunction of \citet{hansen_2009} to the semimartingale setting. The permanent component is a  martingale that accomplishes a change of probabilities to the long forward measure, the limit of $T$-forward measures.
The change of probabilities from the data generating to the long forward measure absorbs the long-term risk-return trade-off and interprets the latter as the long-term risk-neutral measure.
%A sufficient condition is given that guarantees convergence in semimartingale topology of trading strategies that invest in $T$-maturity pure discount bonds to the long bond and convergence in total variation of $T$-forward measures to the long forward measure.
%The volatility of the martingale component drives the wedge between data-generating and long forward probabilities.
%When the Markov property is imposed, the operator theory-based results of \citet{hansen_2009} naturally emerge from the martingale approach.
\end{abstract}

\begin{center}
{\bf Keywords:} Stochastic discount factor, pricing kernel, long-term factorization, long bond, long forward measure, long-term risk-neutral measure, principal eigenfunction.
\end{center}

\section{Introduction}

Following \citet{alvarez_2005using}, \citet{hansen_2009} and \citet{hansen_2012},
this paper decomposes the arbitrage-free pricing kernel
$$
S_t=e^{-\lambda t}\frac{1}{\pi_t}M_t^\infty
$$
into discounting at the long-term discount rate
$\lambda$ (yield on the long bond, a pure discount bond maturing in the distant future), a process $\pi_t$ characterizing gross holding period returns on the long bond net of the long-term discount rate, and a positive martingale $M_t^\infty$ that defines a long-term forward measure. This measure absorbs the long-term risk adjustments of stochastically growing cash flows much like the risk-neutral measure absorbs short-term or instantaneous risk adjustments.
In contrast to the original operator approach of \citet{hansen_2009} and \citet{hansen_2012}, our martingale approach to the characterization of long-term pricing does not require a Markov specification and is based on a limiting procedure, constructing the long forward measure as the limit of finite maturity forward measures (\citet{jarrow_1987pricing}, \citet{jamshidian_1989exact}, \citet{geman_1995changes}) as maturity increases.
The long-term discount rate $\lambda$ and the process  $\pi_t$ are counterparts of the Perron-Frobenius eigenvalue and eigenfunction of  \citet{hansen_2009} and \citet{hansen_2012} in the sense that in Markovian economies the process $\pi_t$ reduces to the function of the Markovian state, $\pi(X_t)$, where $\pi(x)$ is the Perron-Frobenius eigenfunction of the pricing operator with the eigenvalue $e^{-\lambda t}$ as in \citet{hansen_2009}.

%Previously in literature, such factorization has been studied extensively within Markovian model. \citet{hansen_2009} connect the long-term factorization in Markovian environments with the principal eigenfunction of the Markovian {\em pricing semigroup}, feature the associated operator counterpart of the Perron-Frobenius theory, introduce the probability measure associated with the principal eigenfunction, and develop applications to long-term pricing and the term structure of risk prices.

The paper is organized as follows. The key results of our long-term limit characterization are presented in Theorem \ref{implication_L1} and Theorem \ref{asym_long}.
Theorem \ref{long_holding} identifies the exponent $\lambda$ with the long-term yield on cash flows with bounded moments. Theorem \ref{long_power_yield} treats a degenerate case with $\lambda=0$ and shows that in this case, while the long-term discounting is sub-exponential, one may nevertheless define an asymptotic power yield.
Theorem \ref{long_growth_yield} features the long-term risk-return trade-off under the long forward measure.
% and extends the corresponding result of \citet{borovicka_2014mis} to the semimartingale setting, giving the interpretation of the long forward measure as the long-term risk-neutral measure.
It shows that under suitable moment conditions the limiting long-term yield  on cash flows with stochastic growth remains equal to the long-term yield on the pure-discount bond under the long forward measure, as the martingale component absorbs the long-term risk-return trade-off.
Thus, the long-term risk premia on cash flows vanish under the long forward measure even when the cash flows display stochastic growth.
This result leads to the interpretation of the long forward measure as the long-term risk-neutral measure and extends the corresponding Markovian result of \citet{borovicka_2014mis}.
Section \ref{example_Markov}
connects our results with the operator setting of \citet{hansen_2009}  and \citet{hansen_2012} in Markovian environments.

Our treatment based on semimartingale convergence naturally unifies discrete-time characterizations of  \citet{alvarez_2005using} and Markovian characterizations of \citet{hansen_2009},  \citet{hansen_2012} and \citet{borovicka_2014mis} in the framework of the martingale theory and
reveals that the long-term factorization is a fundamental feature of arbitrage-free asset pricing models, rather than an artifact of special assumptions, such as the Markov property.
This characterization enhances our understanding of risk pricing over alternative investment horizons.
The growing related literature includes
\citet{hansen_2012}, \citet{hansen_2012pricing},  \citet{hansen_2013}, \citet{borovicka_2014mis},
\citet{bakshi_2012}, \citet{christensen2014nonparametric},  \citet{christensen_2013estimating}, \citet{linetsky_2014_cont} and \citet{linetsky2016bond}.

\section{Semimartingale Pricing Kernels}
\label{setup}

We work on a complete filtered probability space $(\Omega,{\mathscr F},({\mathscr F}_{t})_{t\geq 0},{\mathbb P})$
with the filtration $({\mathscr F}_{t})_{t\geq 0}$ satisfying the usual conditions of right continuity and completeness. We assume that ${\mathscr F}_0$ is trivial modulo ${\mathbb P}$. The filtration models the information flow in continuous time. The conditional expectation $\mathbb{E}[\cdot|\mathscr{F}_t]$ is written as $\mathbb{E}_t[\cdot]$.
All random variables are identified up to almost sure equivalence. Stochastic processes which have the same paths outside a ${\mathbb P}$-null set are identified without further notice.
A stochastic process $(X_t)_{t\geq 0}$ is said to be {\em adapted} to the filtration $({\mathscr F}_{t})_{t\geq 0}$ if $X_t$ is measurable with respect to $\mathscr{F}_t$ for all $t\geq0$.
For a real-valued process $X$ with right-continuous with left limits (RCLL) paths, $X_{-}$ denotes the process of its left limits, $(X_-)_t=\lim_{s\uparrow t}X_{t}$ for $t>0$ and $(X_-)_0:=X_0$.
A {\em semimartingale} is a real-valued adapted RCLL process $X$ decomposable into the form $X_t=X_0+M_t+A_t,$ where $M_t$ is a local martingale (i.e. there exists a sequence of stopping times $(T_n)_{n\geq 1}$ increasing to infinity such that each stopped process $M_{t\wedge T_n}$ is a martingale)  and $A_t$ is a process of finite variation (i.e. whose paths have bounded variation over each finite time interval).
The semimartingale framework encompasses essentially all models in use in continuous-time finance, including models with stochastic volatility  and jumps. Moreover, discrete-time models are naturally embedded into continuous-time pure jump semimartingales with jumps at discrete times. We refer to \citet{jacod_1987limit} and \citet{protter_2003} for more details on semimartingales.

{\em Emery's distance}  between two semimartingales is defined by:
\be
d_{\cal S}^{\mathbb P}(X,Y)=\sum_{n\geq 1} 2^{-n} \sup_{|\eta|\leq 1}{\mathbb E}^\mathbb{P}\left[1 \wedge \left|\eta_0(X_0-Y_0)+\int_0^n \eta_s d(X-Y)_s\right| \right],
\eel{emeryd}
where $\int_0^t \eta_s dX_s$ denotes the stochastic integral of the predictable process $\eta$ with respect to the semimartingale $X$ and
the supremum is taken over all predictable processes $\eta$ bounded by one, $|\eta_t|\leq 1$.
A process $\eta_t$ is said to be {\em predictable} if it is measurable with respect to the $\sigma$-field on $\Omega\times\mathbb{R}_+$ generated by all left-continuous processes.
We can think of $X_t$ as the price of an asset and $\eta_t$ as the trading strategy (the number of units of the asset $X$ held at time $t$). Then the stochastic integral $\int_0^t \eta_s dX_s$ represents gains from the trading strategy up to time $t$. The predictable property of the trading strategy has the intuition that the agent cannot react instantaneously to the contemporaneous price change. That is, if $X_t$ has a surprise jump at time $t$, the agent cannot adjust his position at exactly the same time to profit from the jump. Endowed with Emery's metric the space of semimartingales is a complete topological vector space (\citet{emery_1979topologie}), and the corresponding topology is called Emery's semimartingale topology. It has a natural economic interpretation.
Suppose we have two asset price processes $X_t$ and $Y_t$ normalized so that $X_0=Y_0=1$.
For simplicity, consider a finite time horizon $[0,1]$ and all strategies trading $X_t$ or $Y_t$ with (long or short) positions restricted not to exceed one unit. Then $d^\mathbb{P}_{\cal S}(X,Y)$ measures the distance between the two assets in terms of maximum achievable difference from trading these two assets. If $X^n\xrightarrow{\rm {\cal S}}X$, where $\xrightarrow{\rm {\cal S}}$ denotes convergence in the semimartingale topology, then as $n$ increases, $X^n$ will become indistinguishable from $X$ in terms of gains from such trading strategies.
While Emery's distance depends on the probability measure under which the expectation is computed, Emery's semimartingale convergence is invariant under locally equivalent measure changes (Theorem II.5 in \citet{memin_1980espaces}).

We assume absence of arbitrage and trading frictions and existence of a {\em pricing kernel} (PK) process $S=(S_t)_{t\geq 0}$ satisfying the following assumptions.
\begin{assumption}
\label{ass_pricing}{\bf (Semimartingale Pricing Kernel)}
The pricing kernel process $S$ is a strictly positive semimartingale with $S_0=1$, $S_{-}$ is strictly positive, and ${\mathbb E}^{\mathbb P}\left[S_T/S_t\right]<\infty$ for all $T> t\geq0$.
\end{assumption}

Assumption \ref{ass_pricing} is in force throughout the paper without further mentioning. For each $0\leq t\leq T<\infty$ the PK defines a pricing operator $({\mathscr P}_{t,T})_{0\leq t \leq T}$ mapping  time-$T$ payoffs $Y$ (${\mathscr F}_T$-measurable random variables) into their time-$t$ prices
${\mathscr P}_{t,T}(Y)$ (${\mathscr F}_t$-measurable random variables):
${\mathscr P}_{t,T}(Y)={\mathbb E}_t^{\mathbb P}\left[S_T Y /S_t\right],$
where $S_T/S_t$ is the {\em stochastic discount factor} (SDF) from $T$ to $t$.

Having specified the PK $S$, we will be interested in the convex cone of positive semimartingales defined as follows.
\begin{definition}\label{def_value}{\bf (Valuation Processes Priced by $S$)}
A positive semimartingale
$V$ with the positive process of its left limits $V_{-}$ is said to be a {\em valuation process} if  the product $V_tS_t$ is a martingale.
\end{definition}
Valuation processes serve as models of assets priced by $S$ and are semimartingale counterparts of valuation functionals of  \citet{hansen_2009} in their Markovian setting. They include both capital gains and reinvested dividends, so that the gross total return earned from holding an asset with the valuation process $V$ during the period from $t$ to $T$ is given by $R_{t,T}^V=V_T/V_{t}$.

A $T$-maturity pure discount (zero-coupon) bond has a single unit cash flow at time $T$ and a valuation process
$P_{t}^T={\mathscr P}_{t,T}(1)={\mathbb E}_t^{\mathbb P}\left[S_T/S_t \right]$, $0\leq t\leq T.$
For each $T$ the zero-coupon bond valuation process $(P_{t}^T)_{t\in [0,T]}$ is a positive semimartingale such that $P_T^T=1$, and the process
\be
M_t^T:=S_t P_{t}^T/P_0^T
\eel{def_mtT}
is a positive martingale on $t\in [0,T]$ with $M_0^T=1$.
For each $T$ we can thus write the factorization $S_t =  (P_{0}^T/P_{t}^T)M_t^T$ on the time interval $t\in [0,T]$.

We can then use the martingale $M_t^T$ to define a new probability measure ${\mathbb Q}^T$ on ${\mathscr F}_{T}$ by
${\mathbb Q}^T|_{{\mathscr F}_{T}}=M_T^T {\mathbb P}|_{{\mathscr F}_{T}}.$
This is the $T$-{\em forward measure} (\citet{jarrow_1987pricing}, \citet{jamshidian_1989exact}, \citet{geman_1995changes}). Under ${\mathbb Q}^T$ the $T$-maturity zero-coupon bond serves as the numeraire, and the pricing operator reads:
${\mathscr P}_{s,t}(Y)=P_{s}^T\mathbb{E}_s^{\mathbb{Q}^T}\left[ Y/P_{t}^T \right]$
for an ${\mathscr F}_t$-measurable payoff $Y$ and $s\leq t\leq T$.

For a given $T$, the $T$-forward measure is defined on   ${\mathscr F}_{T}$ (and, hence, on ${\mathscr F}_{t}$ for all $t\leq T$). We now extend it to ${\mathscr F}_{t}$ for all $t> T$ as follows.
Fix $T$ and consider a self-financing roll-over strategy that starts at time zero by investing one unit of account in $1/P_{0}^T$ units of the $T$-maturity zero-coupon bond. At time $T$ the bond matures, and the value of the strategy is $1/P_{0}^T$ units of account. We roll the proceeds over by re-investing into $1/(P_{0}^T P_{T}^{2T})$ units of the zero-coupon bond with maturity $2T$. We continue with the roll-over strategy, at each time $kT$ re-investing the proceeds into the bond with maturity $(k+1)T$. We denote the valuation process of this self-financing strategy $B_t^T$:
\be
B_t^T = \left(\prod_{i=0}^k P_{iT}^{(i+1)T}\right)^{-1} P_{t}^{(k+1)T},\quad t\in [kT,(k+1)T),\quad k=0,1,\ldots.
\eel{Tbond}
It is clear by construction that the process $S_t B_t^T$ extends the martingale $M_t^T$ to all $t\geq 0$, and, thus, defines the $T$-forward measure ${\mathbb Q}^T$  on  ${\mathscr F}_{t}$ for all $t\geq 0$, where $T$ now has the meaning of the length of the compounding interval. Under  ${\mathbb Q}^T$ extended to all ${\mathscr F}_{t}$ with $t\geq 0$ in this manner, the roll-over strategy $(B_t^T)_{t\geq 0}$ with the compounding interval $T$ serves as the new numeraire. We continue to call the measure extended to all ${\mathscr F}_{t}$ for $t\geq 0$ the $T$-forward measure and use the same notation, as it reduces to the standard definition of the forward measure on ${\mathscr F}_{T}$.
Since the roll-over strategy $(B^T_t)_{t\geq 0}$ and the positive martingale $M_t^T=S_t B_t^T$ are now defined for all $t\geq 0$, we can write the $T$-forward  factorization of the pricing kernel for all $t\geq 0$ as
$S_t = (1/B_t^T)M_t^T.$ The positive martingale $M_t^T$ is now extended to all $t\geq0$ and is an extension of the previous definition (Eq. \eqref{def_mtT}).

\section{The Long-term Limit}
\label{long_term_section}
We are now ready to formally introduce and investigate the long-term factorization.

\begin{definition}{\bf (Long Bond)}
\label{def_longbond}
If the value processes $(B^T_t)_{t\geq 0}$ of the roll-over strategies in $T$-maturity bonds converge to a strictly positive semimartingale $(B_t^\infty)_{t\geq 0}$ uniformly on compacts in probability
as $T\rightarrow \infty$, i.e. for all $t>0$ and $K>0$
\be
\lim_{T\rightarrow \infty} {\mathbb P}(\sup_{s\leq t}|B_s^T-B_s^\infty|>K)=0,
\eel{longbonddef}
we call the limit the {\em long bond}.
\end{definition}

\begin{definition}{\bf (Long Forward Measure)}
\label{def_longforward}
If there exists a measure $\mathbb{Q}^\infty$ equivalent to $\mathbb{P}$ on each ${\mathscr F}_t$ such that the $T$-forward measures converge strongly to ${\mathbb Q}^\infty$ on each ${\mathscr F}_t$, i.e.
$\lim_{T\rightarrow \infty}{\mathbb Q}^T(A)={\mathbb Q}^\infty(A)$
for each $A\in {\mathscr F}_t$ and each $t\geq 0$,
we call it the {\em long forward measure} and denote it ${\mathbb L}$.
\end{definition}

The following theorem gives an explicit sufficient condition easy to verify in applications that ensures stronger modes of convergence --- Emery's semimartingale convergence of valuation processes $B^T$ to the long bond and convergence in total variation of the $T$-forward measures ${\mathbb Q}^T$ to the long forward measure.

\begin{theorem}{\bf (Long Term Factorization and the Long Forward Measure)}
\label{implication_L1}
Suppose that for each $t>0$ the ratio of the ${\mathscr F}_t$-conditional expectation of the pricing kernel $S_T$ to its unconditional expectation converges to a positive limit in $L^1$ as $T\rightarrow \infty$ (under ${\mathbb P}$), i.e. for each $t>0$ there exists an almost surely positive ${\mathscr F}_t$-measurable random variable which we denote $M_t^\infty$ such that
\be
\frac{{\mathbb E}^{\mathbb P}_t[S_T]}{{\mathbb E}^{\mathbb P}[S_T]} \xrightarrow{\rm L^1} M_t^\infty\quad \text{as} \quad T\rightarrow \infty.
\eel{PKL1}
Then the following results hold:\\
(i) The collection of random variables $(M_t^\infty)_{t\geq0}$ is a positive ${\mathbb P}$-martingale, and the family of martingales $(M_t^T)_{t\geq 0}$ converges to  the martingale $(M_t^\infty)_{t\geq0}$ in the semimartingale topology.\\
(ii) The long bond valuation process $(B_t^\infty)_{t\geq0}$ exists, and the roll-over strategies $(B_t^T)_{t\geq 0}$ converge to the long bond $(B_t^\infty)_{t\geq 0}$ in the semimartingale topology.\\
(iii) The pricing kernel possesses the long-term factorization
\be
S_t=\frac{1}{B_t^\infty}M_t^\infty.
\eel{ltf}
(iv) $T$-forward measures ${\mathbb Q}^T$ converge to the long forward measure ${\mathbb L}$ in total variation  on each ${\mathscr F}_t$, and ${\mathbb L}$ is equivalent to ${\mathbb P}$ on ${\mathscr F}_t$ with the Radon-Nikodym derivative $M_t^\infty$.
\end{theorem}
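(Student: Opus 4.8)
The plan is to derive everything from the single hypothesis \eqref{PKL1}, after recording that $M_t^T={\mathbb E}_t^{\mathbb P}[S_T]/{\mathbb E}^{\mathbb P}[S_T]$ for $t\le T$ (since $M_t^T=S_tP_t^T/P_0^T$ and $P_0^T={\mathbb E}^{\mathbb P}[S_T]$), so that for each fixed $t$ the hypothesis is exactly $M_t^T\xrightarrow{L^1}M_t^\infty$ as $T\to\infty$. For the martingale property in part (i), I would use stability of $L^1$ limits under conditioning: each $M_t^T$ has mean one, so ${\mathbb E}^{\mathbb P}[M_t^\infty]=1$; and for $s<t$ and $T\ge t$ we have ${\mathbb E}_s^{\mathbb P}[M_t^T]=M_s^T$, where the left side converges in $L^1$ to ${\mathbb E}_s^{\mathbb P}[M_t^\infty]$ (conditional expectation is an $L^1$-contraction) and the right side to $M_s^\infty$, giving ${\mathbb E}_s^{\mathbb P}[M_t^\infty]=M_s^\infty$. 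Positivity of $M_t^\infty$ is assumed; positivity of the left limits $M_{t-}^\infty$ for $t>0$ follows because a positive martingale is a positive supermartingale, hence absorbed at $0$, so $\{M_{t-}^\infty=0\}\subseteq\{M_t^\infty=0\}$ is ${\mathbb P}$-null. Replace $M^\infty$ by its RCLL modification.

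The crux is the semimartingale convergence $M^T\to M^\infty$ in part (i). Fix $n$ and, for $T\ge n$, put $N^T:=M^T-M^\infty$, a ${\mathbb P}$-martingale on $[0,n]$ with $N_0^T=0$ and ${\mathbb E}^{\mathbb P}[|N_n^T|]\to 0$. For any predictable $\eta$ with $|\eta|\le 1$, the stochastic integral $(\eta\cdot N^T)$ is a (local) martingale null at $0$, and the weak-type $(1,1)$ estimate for martingale transforms (Burkholder; equivalently the Davis decomposition) gives a universal constant $C$ with $\lambda\,{\mathbb P}\big(|(\eta\cdot N^T)_n|>\lambda\big)\le C\,{\mathbb E}^{\mathbb P}[|N_n^T|]$ for all $\lambda>0$. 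Hence, for every $\delta>0$, $\sup_{|\eta|\le 1}{\mathbb E}^{\mathbb P}\big[1\wedge|(\eta\cdot N^T)_n|\big]\le\delta+(C/\delta)\,{\mathbb E}^{\mathbb P}[|N_n^T|]$, which tends to $0$ as $T\to\infty$; summing over $n$ against $2^{-n}$ (each summand is at most one and $M_0^T=M_0^\infty$) yields $d_{\cal S}^{\mathbb P}(M^T,M^\infty)\to 0$. This is the main obstacle precisely because $L^1$-convergence of the terminal values does \emph{not} imply ${\cal H}^1$-convergence of the martingales, so a genuine maximal/weak-type input is required.

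For parts (ii) and (iii): since $S$ and $S_-$ are strictly positive, $1/S$ is a semimartingale (It\^o's formula for $x\mapsto 1/x$), so $B_t^\infty:=M_t^\infty/S_t$ is a strictly positive semimartingale with strictly positive left limits (by part (i) and Assumption \ref{ass_pricing}), and $B_t^\infty S_t=M_t^\infty$ is a ${\mathbb P}$-martingale, i.e. $B^\infty$ is a valuation process. Because $B^T=M^T\cdot(1/S)$ and multiplication by the fixed semimartingale $1/S$ is continuous for Emery's topology (\citet{emery_1979topologie}; alternatively, localize along stopping times on which $1/S$ and $1/S_-$ are bounded and reapply the weak-type estimate of the previous step via optional stopping), part (i) yields $B^T\to B^\infty$ in the semimartingale topology, hence uniformly on compacts in probability --- which is Definition \ref{def_longbond}, so the long bond exists and equals $M^\infty/S$. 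The factorization \eqref{ltf} is then immediate, as $M_t^\infty/B_t^\infty=S_t$.

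For part (iv): for $T\ge t$ one has $\mathbb{Q}^T|_{{\mathscr F}_t}=M_t^T\,{\mathbb P}|_{{\mathscr F}_t}$; define $\mathbb{L}|_{{\mathscr F}_t}:=M_t^\infty\,{\mathbb P}|_{{\mathscr F}_t}$, which is a probability measure (total mass one by part (i)) equivalent to ${\mathbb P}$ on ${\mathscr F}_t$ since $M_t^\infty>0$ a.s., and consistent across $t$ by the martingale property of $M^\infty$. Then $\sup_{A\in{\mathscr F}_t}|\mathbb{Q}^T(A)-\mathbb{L}(A)|=\tfrac12{\mathbb E}^{\mathbb P}\big[|M_t^T-M_t^\infty|\big]\to 0$ as $T\to\infty$ by \eqref{PKL1}; total-variation convergence implies the setwise convergence required in Definition \ref{def_longforward}, so $\mathbb{L}$ is the long forward measure, with Radon--Nikodym derivative $d\mathbb{L}/d{\mathbb P}|_{{\mathscr F}_t}=M_t^\infty$.
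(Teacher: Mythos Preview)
Your proof is correct and follows essentially the same route as the paper: the martingale property of $M^\infty$ via $L^1$-stability under conditioning, semimartingale convergence via Burkholder's weak-type $(1,1)$ inequality (the paper's Lemma \ref{meyer_S}), passage to $B^T\to B^\infty$ via continuity of multiplication by the fixed semimartingale $1/S$ in Emery's topology (the paper's Lemma \ref{product_S}), and total-variation convergence from $L^1$-convergence of densities. Your part (iv) is in fact slightly more direct than the paper's, which detours through $\mathbb{E}^{\mathbb{Q}^\infty}[|B_t^T/B_t^\infty-1|]$ rather than invoking the Scheff\'e-type identity immediately.
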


The proof is given in Appendix \ref{proof_l1}. Theorem \ref{implication_L1} makes the economics of the long forward measure clear. Since under ${\mathbb L}$ the pricing kernel reduces to the reciprocal of the long bond, the long bond is growth optimal under $\mathbb{L}$ (cf. \citet{bansal1997growth}). \citet{linetsky2016bond} further show that under $\mathbb{L}$ the term structure of bond Sharpe ratios
for a sufficiently small holding period generally has an increasing shape in the bond maturity $T$, with
the long bond achieving the maximal instantaneous Sharpe ratio under ${\mathbb L}$ (the \citet{hansen_1990implications} bound). The empirical shape of the term structure of bond Sharpe ratios estimated in \citet{linetsky2016bond} is generally opposite to the one described above, indicating that the martingale component in the long-term factorization is highly economically significant, complementing empirical results in \citet{alvarez_2005using}, \citet{bakshi_2012}, \citet{borovicka_2014mis} and \citet{christensen2014nonparametric}.

We note that the condition Eq.\eqref{PKL1} does not restrict the asymptotic behavior of the initial term structure $P_0^T={\mathbb E}^{\mathbb P}[S_T]$ as maturity $T$ increases, but restricts the time evolution of the PK so that the asymptotic behavior of the initial term structure $P_0^T$ is preserved as time goes on in the sense that for each $t>0$ the ratio $P_t^T/P_0^T$ possesses a finite positive limit.
We next show that under an assumption imposing regularity on the asymptotic behavior of the initial term structure $P_0^T$ added to the assumption Eq.\eqref{PKL1} we can achieve a more refined characterization of the long term factorization, where we can further factorize the long bond $B^\infty_t$ into an exponential factor $e^{\lambda t}$ and a semimartingale $\pi_t$ that extends the principal eigenfunction of \citet{hansen_2009} in Markovian environments to general semimartingale environments.
%This decomposition is germane to establishing the long-term risk-return trade-off under the long forward measure, in particular extending the characterization of the long forward measure as the {\em long-term risk-neutral measure} by Borovicka, Hansen and Scheinkman (2016).
To this end, we first recall the definition of slowly varying functions. A measurable function $L: (0,\infty)\rightarrow (0,\infty)$ is called {\em slowly varying} (at infinity) if for all $a>0$ the ratio $L(ax)/L(x)$ converges to one as $x\rightarrow\infty$. If this limit is a finite positive number for each $a>0$, but not necessarily equal to one, the function is called {\em regularly varying} (see \citet{bingham1989regular} for a detailed study of slowly varying functions).

\begin{theorem}{\bf (Long Term Factorization of the Long bond)}
\label{asym_long}
Suppose assumption Eq.\eqref{PKL1} in Theorem \ref{implication_L1} holds and in addition suppose that for each $t>0$ the ratio $P_0^{T-t}/P_0^T$ has a positive finite limit as $T\rightarrow \infty$.
Then the following results holds:\\
(i) There exists a constant $\lambda$ such that for each $t>0$
$$\lim_{T\rightarrow \infty}\frac{P_0^{T-t}}{P_0^T} = e^{\lambda t}.$$
Furthermore, there exists a slowly varying function $L(x)$ such that $$P_0^t=e^{-\lambda t}L(e^t),\quad t\geq 0.$$
(ii) For all $t\geq0$,
\be
\lim_{T\rightarrow\infty}\frac{-\log P_t^T}{T-t}=\lambda,
\eel{lt_rate}
where the limit is in probability under any measure locally equivalent to $\mathbb{P}$.\\
(iii) The sequence of semimartingales $(\pi_t^T)_{t\geq 0}$ defined for each $T>0$ by $\pi_t^T:=P_t^T/P_0^{T-t}$ for $t\leq T$ and $\pi_t^T:=1$ for $t>T$ converges to a positive semimartingale $\pi_t$ with $\pi_0=1$ in the semimartingale topology as $T\rightarrow \infty$.\\
(iv)
The long bond possesses a factorization $B_t^\infty=e^{\lambda t}\pi_t,$
so that the long-term factorization of the pricing kernel reads
\be
S_t = e^{-\lambda t} \frac{1}{\pi_t} M_t^\infty.
\eel{ltpi}
(v) The positive semimartingale $\pi_t$ satisfies:
\be
\mathbb{E}_t^\mathbb{P}\left[\frac{S_T}{S_t}\pi_T\right]=e^{-\lambda(T-t)}\pi_t
\eel{eigens}
for all $0\leq t<T$ and
\be
\lim_{T\rightarrow\infty}\frac{1}{T-t}\log(\mathbb{E}_t^\mathbb{L}[1/\pi_T])=0
\eel{Lmean}
for all $t\geq 0$, where the limit is in probability under any measure locally equivalent to $\mathbb{P}$.
\end{theorem}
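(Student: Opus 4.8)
The plan is to treat (i)--(v) in order: extract the exponential rate from the initial term structure by regular-variation theory, and then transport the semimartingale-topology convergence already furnished by Theorem~\ref{implication_L1} through a deterministic rescaling. For (i), set $h(x):=P_0^{\log x}$; this is measurable in $x$ because $(t,\omega)\mapsto S_t(\omega)$ is jointly measurable and $S\ge0$, so $T\mapsto P_0^T=\mathbb E^{\mathbb P}[S_T]$ is Borel by Tonelli. The hypothesis that $P_0^{T-t}/P_0^T$ has a positive finite limit $\phi(t)$ says precisely that $h(ax)/h(x)$ has a positive finite limit as $x\to\infty$ for every $a>0$ (for $a=e^{-t}\le1$ this limit is $\phi(t)$; for $a>1$ invert the ratio), and one checks $\phi(s+t)=\phi(s)\phi(t)$. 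By the characterization theorem for regularly varying functions (\citet{bingham1989regular}, Thm.~1.4.1) this limit must equal $a^{\rho}$ for some $\rho\in\mathbb R$, so with $\lambda:=-\rho$ we get $\phi(t)=e^{\lambda t}$, and the Karamata representation $h(x)=x^{\rho}L(x)$ with $L$ slowly varying gives $P_0^t=h(e^t)=e^{-\lambda t}L(e^t)$.

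For (ii), use $P_t^T=\mathbb E_t^{\mathbb P}[S_T]/S_t$ and rewrite $-\log P_t^T=\log S_t-\log\big(\mathbb E_t^{\mathbb P}[S_T]/P_0^T\big)-\log P_0^T$. Dividing by $T-t$ and letting $T\to\infty$: the first term vanishes a.s.\ ($S_t$ is a fixed random variable); the second vanishes in probability, since by hypothesis \eqref{PKL1} the ratio $\mathbb E_t^{\mathbb P}[S_T]/P_0^T\to M_t^\infty>0$ in $L^1$, hence in probability, so its logarithm is tight; and by (i) $-\log P_0^T=\lambda T-\log L(e^T)$ with $\log L(e^T)=o(T)$ (the standard growth bound for slowly varying functions), so $-\log P_0^T/(T-t)\to\lambda$. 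This proves \eqref{lt_rate} in $\mathbb P$-probability; since for fixed $t$ the random variables $-\log P_t^T/(T-t)$ are $\mathscr F_t$-measurable and any locally equivalent measure agrees with $\mathbb P$ up to equivalence on $\mathscr F_t$, convergence in probability transfers.

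For (iii)--(iv), define $\pi_t:=e^{-\lambda t}B_t^\infty$: it is a positive semimartingale with positive left limits and $\pi_0=B_0^\infty=1$ (as $B_0^T=1$ for all $T$), which gives (iv), and then \eqref{ltpi} follows by substituting $B_t^\infty=e^{\lambda t}\pi_t$ into \eqref{ltf}. It remains to prove $\pi^T\to\pi$ in the semimartingale topology; working on a compact $[0,K]$ and taking $T>K$ (so the auxiliary value $\pi_t^T=1$ for $t>T$ plays no role), on $[0,T)$ one has $\pi_t^T=(P_t^T/P_0^T)(P_0^T/P_0^{T-t})=c_T(t)\,B_t^T$ with $c_T(t):=P_0^T/P_0^{T-t}=e^{-\lambda t}L(e^T)/L(e^{T-t})$ deterministic and $c_T(0)=1$. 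Since $e^{T-t}\to\infty$ uniformly for $t\in[0,K]$, the uniform convergence theorem for slowly varying functions (\citet{bingham1989regular}, Thm.~1.2.1) gives $c_T\to e^{-\lambda\cdot}$ uniformly on $[0,K]$. Writing $\pi_t^T-\pi_t=e^{-\lambda t}(B_t^T-B_t^\infty)+(c_T(t)-e^{-\lambda t})B_t^T$, the first term tends to $0$ in the semimartingale topology because multiplication by the fixed $C^1$ deterministic function $e^{-\lambda\cdot}$ is continuous there and $B^T\to B^\infty$ in that topology by Theorem~\ref{implication_L1}(ii). The second term is the crux: after integration by parts (the initial value vanishes since $c_T(0)=1$) the stochastic-integral part is controlled via $\sup_{[0,K]}|c_T-e^{-\lambda\cdot}|\to0$ and the boundedness of $\{B^T\}$ in the space of semimartingales, but the remaining Stieltjes integral $\int B_-^T\,d(c_T-e^{-\lambda\cdot})$ cannot be dispatched by uniform convergence of $c_T$ alone; making it vanish in the semimartingale topology requires exploiting the fine structure of $L$ (Karamata's representation, or the smooth variation theorem) to control the variation of $c_T-e^{-\lambda\cdot}$ on compacts, and I expect this to be the main obstacle.

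Finally, \eqref{eigens} is immediate from (iv): $S_t\pi_t=e^{-\lambda t}M_t^\infty$, so $\mathbb E_t^{\mathbb P}[S_T\pi_T]=e^{-\lambda T}\mathbb E_t^{\mathbb P}[M_T^\infty]=e^{-\lambda T}M_t^\infty=e^{-\lambda(T-t)}S_t\pi_t$ by the martingale property of $M^\infty$. For \eqref{Lmean}, use $1/\pi_T=e^{\lambda T}/B_T^\infty$, the change of measure $d\mathbb L/d\mathbb P|_{\mathscr F_t}=M_t^\infty$ from Theorem~\ref{implication_L1}(iv), and $S_T=M_T^\infty/B_T^\infty$ to get $\mathbb E_t^{\mathbb L}[1/\pi_T]=e^{\lambda T}\mathbb E_t^{\mathbb P}[S_T]/M_t^\infty=L(e^T)\big(\mathbb E_t^{\mathbb P}[S_T]/P_0^T\big)/M_t^\infty$ by (i). Taking logarithms, dividing by $T-t$, and letting $T\to\infty$, the three terms vanish by $\log L(e^T)=o(T)$, by $\log M_t^\infty$ being a fixed finite random variable, and by $\mathbb E_t^{\mathbb P}[S_T]/P_0^T\to M_t^\infty>0$ in probability (tight logarithm), respectively; this gives \eqref{Lmean} in $\mathbb P$-probability, hence under any locally equivalent measure as in (ii).
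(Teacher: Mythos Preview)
Your argument matches the paper's for (i), (ii), and (v). For (iii)--(iv) the paper avoids your integration-by-parts decomposition by invoking a product lemma for Emery's topology (Kardaras (2013), Proposition~2.10, quoted in the appendix as Lemma~\ref{product_S}): if $X^n\xrightarrow{\mathcal S}X$ and $Y^n\xrightarrow{\mathcal S}Y$ then $X^nY^n\xrightarrow{\mathcal S}XY$. With $X^T=c_T$ and $Y^T=B^T$ (the latter handled by Theorem~\ref{implication_L1}(ii)), this yields $\pi^T=c_T\,B^T\xrightarrow{\mathcal S}e^{-\lambda\,\cdot}B^\infty=\pi$ in one stroke, \emph{provided} $c_T\xrightarrow{\mathcal S}e^{-\lambda\,\cdot}$.

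That proviso is exactly the obstacle you isolate, and the paper does not prove it: it simply writes that $c_T\to e^{-\lambda\,\cdot}$ ``uniformly on compacts, and thus also in semimartingale topology.'' Your worry is well founded---for deterministic finite-variation processes, Emery's distance is governed by total variation, not the sup norm, so uniform convergence alone is insufficient in general. The product lemma therefore tidies the bookkeeping but does not close the gap you found. Your proposed fix via Karamata's integral representation of $L$ (which gives $xL'(x)/L(x)\to 0$ along a smooth representative, hence uniform decay of $\tfrac{d}{dt}c_T(t)$ on compacts and so vanishing total variation of $c_T-e^{-\lambda\,\cdot}$) is the right route; on this point you have been more careful than the published proof.
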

The proof is given in Appendix \ref{proof_l1} and relies on Karamata's Characterization Theorem for regularly varying functions that states that any regularly varying function is of the form $x^{-\lambda} L(x)$ for some real constant $\lambda$ and a slowly varying function $L(x)$.  We note that the original long-term characterization of \citet{alvarez_2005using} in discrete time is naturally nested in Theorems \ref{implication_L1} and \ref{asym_long} as a special case by embedding a discrete-time adapted process into a continuous-time semimartingale. Appendix \ref{app_aj} provides the precise result.

Theorem \ref{asym_long} shows that under the regularity assumption on the asymptotic behavior of the initial term structure requiring convergence of the ratio $P_0^{T-t}/P_0^T={\mathbb E}^{\mathbb P}[S_{T-t}]/{\mathbb E}^{\mathbb P}[S_{T}]$ as $T\rightarrow \infty$ for each fixed $t$ along with our assumption Eq.\eqref{PKL1} that the asymptotic behavior of the term structure is preserved over time,  the pricing kernel possesses a positive semimartingale $\pi_t$ that directly extends the process $\pi(X_t)$ associated with the principal eigenfunction $\pi(x)$ of \citet{hansen_2009} in Markovian environments (see Section \ref{example_Markov} for details).
Indeed, Eq.\eqref{eigens} directly extends the eigenfunction problem studied in \citet{hansen_2009} and \citet{linetsky_2014_cont} (also see Section \ref{example_Markov}). Eq.\eqref{Lmean} shows that, after removing the exponential growth or decay, $e^{\lambda t}$, the $\mathbb{L}$-mean of the reciprocal of the long bond has zero growth rate. Thus, the factor $e^{\lambda t}$, in fact, removes {\em all} of the exponential growth or decay, and our factorization $B_t^\infty=e^{\lambda t}\pi_t$ is indeed germane to the study of the long-term behavior of the pricing kernel.
The corresponding long-term factorization \eqref{ltpi}, refining the factorization \eqref{ltf}, is a semimartingale counterpart of the long-term factorization of  \citet{hansen_2009} associated with the principal eigenvalue of the pricing kernel germane to its long-term behavior (see Section \ref{example_Markov} for further details).

Eq.\eqref{lt_rate} in Theorem \ref{asym_long} also implies that $\lambda$ appearing in the long-term factorization is the long-term discount rate (the long-term asymptotic zero-coupon bond yield) and is independent of time $t$ when the yield is computed. This is consistent with the theorem of \citet{dybvig_1996long}, who assert that the long zero-coupon rate can never fall under more general circumstances. Under our conditions in Theorem 3.2, the long rate remains constant, rather than merely non-decreasing.

As we show next, the property Eq.\eqref{Lmean} of the semimartingale $\pi_t$ turns out to be essential for the study of the long-term risk-return trade-off under $\mathbb{L}$.
%As we elaborate in the next theorem, the constant $\lambda$ determines the asymptotic yield on the zero-coupon bond of asymptotically long maturity, while the semimartingale $\pi_t$ determines the gross return on the long bond over the period from zero to $t$.
%\begin{corollary}
%\label{pi_converge}
%Under the condition of Theorem \ref{asym_long}, if $\lim_{t\rightarrow\infty}L(t)=1$, then \\ $\lim_{t\rightarrow}\mathbb{E}^\mathbb{L}[\frac{1}{\pi_t}]=1$.
%\end{corollary}
%
%This corollary implies that at least when slowly varying function $L(t)$ converges to 1 at infinity, our decomposition Eq.\eqref{lb_decomp} is germane in the sense that the (reciprocal of) remaining term converges in first moment to a non-zero limit.
To this end, we consider  a positive semimartingale cash flow process $(C_t)_{t\geq 0}$. The $\mathbb{L}$-expected gross return over the holding period from time $t$ to time $T$ from receiving the cash flow $C_T$ at time $T$ is
$\mathbb{E}_t^\mathbb{L}[C_T]/\mathbb{E}_t^\mathbb{P}[S_TC_T]=\mathbb{E}_t^\mathbb{L}[C_T]/\mathbb{E}_t^\mathbb{L}[C_T/B_T^\infty].$
Following Section IV.A in  \citet{borovicka_2014mis}, we also define the expected exponential yield under ${\mathbb L}$ as follows:
\be
\rho_{t,T}^{\mathbb L}(C_T):=\frac{1}{T-t}\log\left(\frac{\mathbb{E}_t^\mathbb{L}[C_T]}{\mathbb{E}_t^\mathbb{L}[C_T/B_T^\infty]}\right)=\lambda+\frac{1}{T-t}\log\left(\frac{\mathbb{E}_t^\mathbb{L}[C_T]}{\mathbb{E}_t^\mathbb{L}[C_T/\pi_T]}\right).
\eel{rho_exp}
We have the following result characterizing the asymptotic yield.
\begin{theorem}{\bf (Long-term Exponential Yield)}
\label{long_holding}
(i) Suppose the assumptions of Theorem \ref{asym_long} hold, and for some $t\geq 0$ there exist positive constants $0<c<C<\infty$ and $T'>0$ such that almost surely
$c<\mathbb{E}_t^\mathbb{L}[C_T]<C$  and $c<\mathbb{E}_t^\mathbb{L}[C_T/\pi_T]/\mathbb{E}_t^\mathbb{L}[1/\pi_T]<C$ for all $T>T'$. Then we have
\be
\lim_{T\rightarrow\infty}\rho_{t,T}^{\mathbb L}(C_T)=\lambda
\ee
where the limit is in probability under any measure locally equivalent to $\mathbb{P}$.\\
(ii) If furthermore there exist $0<c<C<\infty$ such that almost surely $c<\mathbb{E}_t^\mathbb{P}[C_T]<C$ for all $T>T'$, then $\rho_{t,T}^{\mathbb P}(C_T)$, the expected exponential yield under ${\mathbb P}$, has the same asymptotic limit $\lambda$.
\end{theorem}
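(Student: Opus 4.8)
The plan is to read the asymptotics off directly from the decomposition \eqref{rho_exp} together with the zero-growth property \eqref{Lmean} of the reciprocal long bond established in Theorem \ref{asym_long}. For part (i), I would start from
\[
\rho_{t,T}^{\mathbb L}(C_T)=\lambda+\frac{1}{T-t}\log\frac{\mathbb{E}_t^\mathbb{L}[C_T]}{\mathbb{E}_t^\mathbb{L}[C_T/\pi_T]}
\]
and split the ratio inside the logarithm as
\[
\frac{\mathbb{E}_t^\mathbb{L}[C_T]}{\mathbb{E}_t^\mathbb{L}[C_T/\pi_T]}=\mathbb{E}_t^\mathbb{L}[C_T]\cdot\frac{\mathbb{E}_t^\mathbb{L}[1/\pi_T]}{\mathbb{E}_t^\mathbb{L}[C_T/\pi_T]}\cdot\frac{1}{\mathbb{E}_t^\mathbb{L}[1/\pi_T]}.
\]
By the standing hypotheses the first factor lies in $(c,C)$ and the second in $(1/C,1/c)$ for all $T>T'$, almost surely; hence $\frac{1}{T-t}\log$ of each of them tends to $0$ almost surely, a fortiori in probability under any measure locally equivalent to $\mathbb{P}$. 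For the third factor, $-\frac{1}{T-t}\log\mathbb{E}_t^\mathbb{L}[1/\pi_T]\to 0$ in probability under any such measure by \eqref{Lmean}. Adding the three contributions gives $\rho_{t,T}^{\mathbb L}(C_T)\to\lambda$, which is the claim.

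For part (ii), I would first rewrite the time-$t$ price of the cash flow $C_T$. Using the long-term factorization \eqref{ltpi}, $S_T/S_t=e^{-\lambda(T-t)}(\pi_t/\pi_T)(M_T^\infty/M_t^\infty)$, and the fact that $M_t^\infty$ is the density of $\mathbb{L}$ with respect to $\mathbb{P}$ on ${\mathscr F}_t$ (Theorem \ref{implication_L1}(iv)), one gets
\[
\mathbb{E}_t^\mathbb{P}\!\left[\frac{S_T}{S_t}C_T\right]=e^{-\lambda(T-t)}\pi_t\,\mathbb{E}_t^\mathbb{L}\!\left[\frac{C_T}{\pi_T}\right],
\]
so that $\rho_{t,T}^{\mathbb P}(C_T)=\frac{1}{T-t}\log\big(\mathbb{E}_t^\mathbb{P}[C_T]/\mathbb{E}_t^\mathbb{P}[S_TC_T/S_t]\big)=\lambda+\frac{1}{T-t}\log\big(\mathbb{E}_t^\mathbb{P}[C_T]/(\pi_t\,\mathbb{E}_t^\mathbb{L}[C_T/\pi_T])\big)$. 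Subtracting the expression for $\rho_{t,T}^{\mathbb L}(C_T)$ used in part (i) then telescopes the $\mathbb{E}_t^\mathbb{L}[C_T/\pi_T]$ terms and leaves
\[
\rho_{t,T}^{\mathbb P}(C_T)-\rho_{t,T}^{\mathbb L}(C_T)=\frac{1}{T-t}\log\frac{\mathbb{E}_t^\mathbb{P}[C_T]}{\pi_t\,\mathbb{E}_t^\mathbb{L}[C_T]}.
\]
Since $\pi_t$ is a fixed almost surely finite positive random variable and, by the added hypothesis, $\mathbb{E}_t^\mathbb{P}[C_T]\in(c,C)$ and $\mathbb{E}_t^\mathbb{L}[C_T]\in(c,C)$ for all $T>T'$, the right-hand side tends to $0$ almost surely, whence $\rho_{t,T}^{\mathbb P}(C_T)\to\lambda$ as well.

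I expect the only real content beyond bookkeeping to be the input \eqref{Lmean}; everything else is elementary once that is in hand. The points that need a little care are: (a) justifying that all the conditional expectations appearing --- in particular $\mathbb{E}_t^\mathbb{L}[1/\pi_T]$ and the price $\mathbb{E}_t^\mathbb{P}[S_TC_T/S_t]$ --- are almost surely finite, which follows from Assumption \ref{ass_pricing} together with $1/\pi_T=e^{\lambda T}/B_T^\infty$ and the change of measure to $\mathbb{L}$; (b) checking that the change-of-numeraire identity for the price is valid at a single fixed $t$ under only the pointwise-in-$T$ integrability at our disposal, without assuming $C$ to be a valuation process; and (c) keeping track of the mode of convergence --- the deterministic-looking boundedness terms converge almost surely under $\mathbb{P}$, hence under any locally equivalent measure including $\mathbb{L}$, while the term inherited from \eqref{Lmean} converges only in probability, so the conclusion is asserted in probability under any measure locally equivalent to $\mathbb{P}$.
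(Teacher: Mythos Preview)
Your proposal is correct and is essentially the same argument as the paper's. For part (i) the paper sandwiches $\mathbb{E}_t^\mathbb{L}[C_T]/\mathbb{E}_t^\mathbb{L}[C_T/\pi_T]$ between $c/(C\,\mathbb{E}_t^\mathbb{L}[1/\pi_T])$ and $C/(c\,\mathbb{E}_t^\mathbb{L}[1/\pi_T])$ and invokes \eqref{Lmean}, which is exactly your three-factor decomposition written as an inequality; for part (ii) the paper merely says ``proved similarly,'' and your explicit change-of-measure computation $\mathbb{E}_t^\mathbb{P}[S_TC_T/S_t]=e^{-\lambda(T-t)}\pi_t\,\mathbb{E}_t^\mathbb{L}[C_T/\pi_T]$ is precisely what that ``similarly'' unpacks to.
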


%Note almost surely in above Theorem (as well as in subsequent Theorem) can be understood under any measure locally equivalent to $\mathbb{P}$.
The proof is based on \eqref{Lmean} and is given in Appendix \ref{proof_l1}.
Theorem \ref{long_holding} shows that as long as the appropriate moments of the cash flow process $C_T$ remain bounded as $T$ increases,
the asymptotic exponential yield on the cash flow is equal to the long-term zero-coupon yield $\lambda$, regardless of the specifics of the cash flow process.
One class of examples of positive cash flow processes satisfying the bounded moments assumptions in Theorem \ref{long_holding} are bounded cash flows that are also bounded below away from zero. Another important class of examples are cash flows of the form $C_t=f(X_t)$, where $X_t$ is a Markov state satisfying appropriate stability assumptions under ${\mathbb L}$ and the payoff function $f$ satisfying appropriate moment conditions (see \citet{borovicka_2014mis} for such examples). If the ${\mathbb P}$-moments are also bounded and bounded away from zero, then $\lambda$ is also the limiting yield under ${\mathbb P}$, irrespective of the structure of the cash flow process $C_t$.
In other words, as in the Markovian setting in \citet{borovicka_2014mis}, appropriately bounded or stationary cash-flow risk does not alter the long-term yield. The limiting risk premium is zero under both probability measures ${\mathbb P}$ and ${\mathbb L}$
for such cash-flow risks.
%Theorem \ref{long_holding} shows that the asymptotic yield on {\em any} cash flow satisfying appropriate boundedness or stationarity assumptions is also equal to $\lambda$, under both ${\mathbb P}$ and ${\mathbb L}$.

We note that for pricing kernels with $\lambda=0$, the limiting result in Theorem \ref{long_holding} degenerates as the limiting exponential yield vanishes, since in this case discounting at the exponential rate is too fast. In particular, consider the case where $P_0^t=O(t^{-\gamma})$ (see \citet{brody_2013social} for their model of social discounting). In this case we have a similar limiting result for the power yield. Define the expected power yield as follows:
\be
\varrho_{t,T}^{\mathbb L}(C_T):=\frac{1}{\log (T-t)}\log\left(\frac{\mathbb{E}_t^\mathbb{L}[C_T]}{\mathbb{E}_t^\mathbb{L}[C_T/B_T^\infty]}\right)
\eel{rho_power}
and similarly under ${\mathbb P}$.
\begin{theorem}{\bf (Long-term Power Yield)}
\label{long_power_yield}
(i) Suppose the assumptions of Theorem \ref{asym_long} hold and $P_0^t=O(t^{-\gamma})$ as $t\rightarrow\infty$. Suppose further that for some $t\geq 0$ there exist $0<c<C<\infty$ and $T'>0$ such that almost surely $c<\mathbb{E}_t^\mathbb{L}[C_T]<C$ and $c<\mathbb{E}_t^\mathbb{L}[C_T/\pi_T]/\mathbb{E}_t^\mathbb{L}[1/\pi_T]<C$ for all $T>T'$. Then we have
\be
\lim_{t\rightarrow\infty}\varrho_{t,T}^{\mathbb L}(C_T)=\gamma
\ee
where the limit is in probability under any measure locally equivalent to $\mathbb{P}$.\\
(ii) If furthermore there exist $0<c<C<\infty$ such that almost surely
$c<\mathbb{E}_t^\mathbb{P}[C_T]<C$ for all $T>T'$, then $\varrho_{t,T}^{\mathbb P}(C_T)$, the expected power yield under ${\mathbb P}$, has the same asymptotic limit $\gamma$.
\end{theorem}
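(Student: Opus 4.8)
The plan is to run the argument of Theorem \ref{long_holding} on the slower $\log(T-t)$ normalization in place of $T-t$. First I would record what the hypotheses buy us. Since $P_0^t$ decays only polynomially, part (i) of Theorem \ref{asym_long} forces $\lambda=0$ (an $e^{-\lambda t}$ factor with $\lambda>0$ would push the power yield to $+\infty$), so $B_t^\infty=\pi_t$ and $P_0^t=L(e^t)$ with $L$ slowly varying; moreover, read in its sharp sense (exact polynomial order $t^{-\gamma}$, i.e.\ $P_0^t$ regularly varying of index $-\gamma$ --- the literal bound $P_0^t=O(t^{-\gamma})$ only gives $\liminf$), the hypothesis yields $\log P_0^T/\log T\to-\gamma$, hence also $\log P_0^{T-t}/\log(T-t)\to-\gamma$. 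The second ingredient is the identity $\mathbb E_t^{\mathbb L}[1/\pi_T]=P_t^T/\pi_t$, valid here because $\lambda=0$: from the factorization $S_t=\pi_t^{-1}M_t^\infty$ (Theorems \ref{implication_L1}(iii), \ref{asym_long}(iv)) and $M_t^\infty={\rm d}\mathbb L/{\rm d}\mathbb P$ on $\mathscr F_t$ (Theorem \ref{implication_L1}(iv)),
\[
\mathbb E_t^{\mathbb L}[1/\pi_T]=(M_t^\infty)^{-1}\mathbb E_t^{\mathbb P}[M_T^\infty/\pi_T]=(M_t^\infty)^{-1}\mathbb E_t^{\mathbb P}[S_T]=(M_t^\infty)^{-1}S_tP_t^T=P_t^T/\pi_t,
\]
which is the $\lambda=0$ specialization of the computation behind \eqref{Lmean}.

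For (i), using $B_T^\infty=\pi_T$ write $\varrho_{t,T}^{\mathbb L}(C_T)=\big(\log\mathbb E_t^{\mathbb L}[C_T]-\log\mathbb E_t^{\mathbb L}[C_T/\pi_T]\big)/\log(T-t)$. The bound $c<\mathbb E_t^{\mathbb L}[C_T]<C$ makes the $\log\mathbb E_t^{\mathbb L}[C_T]$ term $O(1/\log(T-t))$. In the remaining term, factor $\mathbb E_t^{\mathbb L}[C_T/\pi_T]=\rho_T\,\mathbb E_t^{\mathbb L}[1/\pi_T]$ with $\rho_T:=\mathbb E_t^{\mathbb L}[C_T/\pi_T]/\mathbb E_t^{\mathbb L}[1/\pi_T]\in(c,C)$ a.s.; then, by the identity above, $\log\mathbb E_t^{\mathbb L}[C_T/\pi_T]=\log\rho_T+\log P_t^T-\log\pi_t$, where $\log\rho_T$ is bounded by deterministic constants and $\log\pi_t$ is an a.s.\ finite $\mathscr F_t$-measurable constant. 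Dividing by $\log(T-t)$, those pieces contribute $o(1)$ and
\[
\varrho_{t,T}^{\mathbb L}(C_T)=-\frac{\log P_t^T}{\log(T-t)}+o(1)\quad\text{a.s.}
\]
Finally, $P_t^T=P_0^{T-t}\pi_t^T$ where $\pi_t^T\to\pi_t>0$ by Theorem \ref{asym_long}(iii) (semimartingale convergence there implies, for the fixed $t$, convergence in $\mathbb P$-probability), so $\log\pi_t^T\to\log\pi_t$ (a.s.\ finite) in probability and therefore $-\log P_t^T/\log(T-t)=-\log P_0^{T-t}/\log(T-t)-\log\pi_t^T/\log(T-t)\to\gamma$ in $\mathbb P$-probability. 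Every quantity here is $\mathscr F_t$-measurable for the fixed $t$ and the limit is the constant $\gamma$, so the convergence persists under any measure equivalent to $\mathbb P$ on $\mathscr F_t$, in particular under any locally equivalent measure.

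Part (ii) is then immediate: the time-$t$ price of $C_T$ in the denominator of the exponential yield does not depend on which measure is used in the numerator, so $\varrho_{t,T}^{\mathbb P}(C_T)-\varrho_{t,T}^{\mathbb L}(C_T)=\big(\log\mathbb E_t^{\mathbb P}[C_T]-\log\mathbb E_t^{\mathbb L}[C_T]\big)/\log(T-t)$, and under the added bound $c<\mathbb E_t^{\mathbb P}[C_T]<C$ (together with $c<\mathbb E_t^{\mathbb L}[C_T]<C$ from (i)) the numerator is a.s.\ bounded, so this difference $\to0$ and $\varrho_{t,T}^{\mathbb P}(C_T)\to\gamma$. (Equivalently, rerun the estimates of (i) with $\mathbb E_t^{\mathbb P}[S_TC_T/S_t]=\pi_t\,\mathbb E_t^{\mathbb L}[C_T/\pi_T]$ in place of the $\mathbb L$-price.)

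The calculations are routine given Theorems \ref{implication_L1} and \ref{asym_long}; the only genuinely delicate point is moving from results phrased on the $T$-scale (in particular \eqref{Lmean}, which is too coarse here) to the finer $\log(T-t)$ scale. Concretely, the substantive steps are (a) confirming $\lambda=0$ so the limiting power yield is finite, and (b) extracting the exact polynomial rate $\log P_0^{T-t}/\log(T-t)\to-\gamma$ from the hypothesis on the initial term structure; the rest is the bounded-moments bookkeeping already used in Theorem \ref{long_holding}.
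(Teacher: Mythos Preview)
Your proposal is correct and follows essentially the same route as the paper: observe $\lambda=0$ so $B_t^\infty=\pi_t$, use $\mathbb{E}_t^{\mathbb L}[1/\pi_T]=P_t^T/\pi_t$, show $-\log P_t^T/\log(T-t)\to\gamma$ in probability (the paper just says ``similar to the proof of Theorem \ref{asym_long}'' for this step, while you spell it out via $P_t^T=P_0^{T-t}\pi_t^T$), and then sandwich the yield using the bounded-moment hypotheses. Your aside that the literal $O(t^{-\gamma})$ only delivers a one-sided bound and must be read as exact polynomial order is a fair remark about the statement; the paper's own proof tacitly makes the same reading.
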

Theorems \ref{long_holding} and \ref{long_power_yield} consider cash flow processes with moments that remain bounded and, thus, exclude  long-term growth. Following Hansen and Scheinkman (2009) and \citet{borovicka_2014mis}, Section IV, we now consider
cash flows whose stochastic growth implies non-vanishing limiting risk premia under ${\mathbb P}$.
Namely, consider a positive semimartingale {\em growth index} $G_t$ (normalized by $G_0=1$) that can be interpreted as the inflation index when modeling inflation-indexed bonds or aggregate equity dividend growth when modeling equity.
We are interested in the exponential yield on the stochastically growing cash flow $G_t$ under ${\mathbb L}$, i.e.
\be
\rho_{t,T}^{\mathbb L}(G_T)=\frac{1}{T-t}\log\left(\frac{\mathbb{E}_t^\mathbb{L}[G_T]}{\mathbb{E}_t^\mathbb{L}[G_TB_t/B_T^\infty]}\right)=\lambda +\frac{1}{T-t}\log\left(\frac{\mathbb{E}_t^\mathbb{P}[\frac{S_TG_T\pi_T}{S_t\pi_t}]}{\mathbb{E}_t^\mathbb{P}[\frac{S_TG_T}{S_t}]}\right).
\eel{growth_yield}
In the second equality we re-wrote the ${\mathbb L}$-expectations in terms of ${\mathbb P}$-expectations using the relationship $M_t^\infty=S_t\pi_t e^{\lambda t}$.
If $G_t$ is interpreted as the inflation index, we can treat $S_tG_t$ as the real pricing kernel. More generally, $S_tG_t$ is interpreted as the growth-indexed pricing kernel.
If we assume that $S_tG_t$ also satisfies the conditions in Theorem 3.1 and Theorem \ref{asym_long}, i.e. for each $t>0$ $\mathbb{E}_t^\mathbb{P}[S_TG_T]/\mathbb{E}^\mathbb{P}[S_TG_T]$ converges to a positive random variable in $L^1$ as $T\rightarrow \infty$ and $\mathbb{E}^\mathbb{P}[S_{T-t}G_{T-t}]/\mathbb{E}^\mathbb{P}[S_TG_T]$ converges to a positive finite limit, then we have the long-term factorization for the growth-indexed pricing kernel $$S_tG_t=e^{-\lambda^G t}\frac{1}{\pi_t^G}M_t^{G,\infty},$$ where $M_t^{G,\infty}$ is a martingale and can be used to define a new probability measure ({\em long  forward growth measure}) $\mathbb{G}|_{\mathscr{F}_t}=M_t^{G,\infty}\mathbb{P}|_{\mathscr{F}_t}$ on each $\mathscr{F}_t$. We then have the following result paralleling the long-term risk-return trade-off under ${\mathbb L}$ formulated in \citet{borovicka_2014mis} in the Markovian setting.

\begin{theorem}{\bf (Long-Term Risk-Return Trade-off under $\mathbb{L}$)}
\label{long_growth_yield}
Suppose both the pricing kernel $S_t$ and the growth-indexed pricing kernel $S_tG_t$ satisfy the conditions in Theorem \ref{asym_long}. Suppose further that for some $t\geq 0$ there exist constants $0<c<C<\infty$ and $T'>0$ such that almost surely $c<\frac{\mathbb{E}_t^{\mathbb{G}}[\pi_T/\pi_T^G]}{\mathbb{E}_t^{\mathbb{G}}[1/\pi_T^G]}<C$ for all $T>T'$. Then
\be
\lim_{t\rightarrow\infty}\rho_{t,T}^{\mathbb L}(G_T)=\lambda,
\eel{growth_limit_yield}
where the limit is in probability under any measure locally equivalent to $\mathbb{P}$.
\end{theorem}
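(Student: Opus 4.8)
The plan is to rewrite the ratio of $\mathbb{P}$-conditional expectations in the last member of \eqref{growth_yield} as the bounded $\mathbb{G}$-conditional ratio appearing in the hypothesis, using the change of measure that defines $\mathbb{G}$, and then to read off the exponential rate. The key identity is the following: since $S_tG_t$ satisfies the conditions of Theorem \ref{asym_long} by assumption, it admits the factorization $S_tG_t=e^{-\lambda^G t}\frac{1}{\pi_t^G}M_t^{G,\infty}$, so that $M_t^{G,\infty}=S_tG_t\pi_t^Ge^{\lambda^G t}$ is the Radon--Nikodym density $d\mathbb{G}/d\mathbb{P}$ on $\mathscr{F}_t$. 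Hence, by the conditional Bayes rule, for every nonnegative $\mathscr{F}_T$-measurable $Y$,
\[
\mathbb{E}_t^{\mathbb{P}}\!\left[S_TG_T\pi_T^G\,Y\right]=e^{-\lambda^G T}\,\mathbb{E}_t^{\mathbb{P}}\!\left[M_T^{G,\infty}Y\right]=S_tG_t\pi_t^G\,e^{-\lambda^G(T-t)}\,\mathbb{E}_t^{\mathbb{G}}\!\left[Y\right].
\]

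Applying this with $Y=\pi_T/\pi_T^G$ to the numerator of the ratio in \eqref{growth_yield} (after writing $S_TG_T\pi_T=(S_TG_T\pi_T^G)\cdot(\pi_T/\pi_T^G)$) and with $Y=1/\pi_T^G$ to the denominator (after writing $S_TG_T=(S_TG_T\pi_T^G)\cdot(1/\pi_T^G)$), the common factors $S_tG_t\pi_t^Ge^{-\lambda^G(T-t)}$ cancel, leaving only the extra $1/(S_t\pi_t)$ in the numerator and $1/S_t$ in the denominator; thus
\[
\frac{\mathbb{E}_t^{\mathbb{P}}\!\left[\tfrac{S_TG_T\pi_T}{S_t\pi_t}\right]}{\mathbb{E}_t^{\mathbb{P}}\!\left[\tfrac{S_TG_T}{S_t}\right]}=\frac{1}{\pi_t}\cdot\frac{\mathbb{E}_t^{\mathbb{G}}[\pi_T/\pi_T^G]}{\mathbb{E}_t^{\mathbb{G}}[1/\pi_T^G]},
\]
and substituting into \eqref{growth_yield} gives $\rho_{t,T}^{\mathbb{L}}(G_T)=\lambda-\frac{\log\pi_t}{T-t}+\frac{1}{T-t}\log\!\big(\mathbb{E}_t^{\mathbb{G}}[\pi_T/\pi_T^G]/\mathbb{E}_t^{\mathbb{G}}[1/\pi_T^G]\big)$. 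Now let $T\to\infty$: because $\pi_t$ is a fixed, almost surely strictly positive and finite $\mathscr{F}_t$-measurable random variable, $\frac{\log\pi_t}{T-t}\to0$; and by the hypothesis $c<\mathbb{E}_t^{\mathbb{G}}[\pi_T/\pi_T^G]/\mathbb{E}_t^{\mathbb{G}}[1/\pi_T^G]<C$ for $T>T'$ the third term is squeezed between $\frac{\log c}{T-t}$ and $\frac{\log C}{T-t}$, hence also tends to $0$. Therefore $\rho_{t,T}^{\mathbb{L}}(G_T)\to\lambda$. For fixed $t$ all the quantities involved are $\mathscr{F}_t$-measurable and the exceptional set sits in $\mathscr{F}_t$ with $\mathbb{P}$-measure zero, so the convergence is almost sure — and in particular in probability — under every measure locally equivalent to $\mathbb{P}$, since such measures agree with $\mathbb{P}$ on the null sets of $\mathscr{F}_t$.

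The argument is essentially bookkeeping, and the only point requiring care is the cancellation: one must carry along the normalizing factors $S_t$, $G_t$, $\pi_t^G$ and, crucially, the exponential $e^{-\lambda^G(T-t)}$ coming from the growth rate of the growth-indexed pricing kernel, and check that everything other than $1/\pi_t$ and the $\mathbb{G}$-ratio drops out — in particular $\lambda^G$ disappears entirely, which is precisely why the limiting yield equals $\lambda$ rather than an expression involving $\lambda^G$. It is also worth noting at the outset that the boundedness hypothesis (consistent with \eqref{Lmean} applied to $S_tG_t$, which gives $\frac{1}{T-t}\log\mathbb{E}_t^{\mathbb{G}}[1/\pi_T^G]\to0$) already forces the relevant conditional expectations to be finite and the denominator strictly positive, so that $\rho_{t,T}^{\mathbb{L}}(G_T)$ is well defined for all large $T$.
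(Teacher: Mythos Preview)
Your proof is correct and follows exactly the same route as the paper: change measure to $\mathbb{G}$ via the Bayes rule using $M_t^{G,\infty}=S_tG_t\pi_t^Ge^{\lambda^G t}$, obtain the identity $\rho_{t,T}^{\mathbb L}(G_T)=\lambda+\frac{1}{T-t}\log\bigl(\mathbb{E}_t^{\mathbb{G}}[\pi_T/\pi_T^G]\big/(\pi_t\,\mathbb{E}_t^{\mathbb{G}}[1/\pi_T^G])\bigr)$, and then let $T\to\infty$ using the boundedness hypothesis and the finiteness of $\pi_t$. The paper compresses the measure-change computation into a single line whereas you spell out the cancellation of $S_tG_t\pi_t^G e^{-\lambda^G(T-t)}$ explicitly, but the argument is identical.
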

This result shows that under suitable moment conditions the limiting yield under the long forward measure ${\mathbb L}$ remains the same and equal to the long-term yield on the pure-discount bond even after we introduce stochastic growth in the cash flow. Thus, the long-term risk premia on cash flows vanish under ${\mathbb L}$ even when the cash flows display stochastic growth. Evidently, this conclusion is altered under ${\mathbb P}$, where the limiting yield of the growing cash flow $G_t$ is no longer equal to $\lambda$. This result leads to the interpretation of the long forward measure as the {\em long-term risk-neutral measure}.
As such, Theorem \ref{long_growth_yield} extends the result in Section IV.B of \citet{borovicka_2014mis} from their Markovian setting to our general semimartingale setting.

\section{Markovian Environments}
\label{example_Markov}

%\subsection{Theory}
%\label{markov_theory}

Our focus here is to show how results of \citet{hansen_2009} in Markovian environments based on their Perron-Frobenius theory of positive eigenfunctions of Markovian pricing operators naturally arise in the context of Theorem \ref{implication_L1} and \ref{asym_long}.
%{\bf Markovian Pricing Kernels.}
We now assume that the underlying filtration is generated by a Markov process $X$ and the PK is a positive  multiplicative functional of $X$.
More precisely,
the stochastic driver of all economic uncertainty is assumed to be a  {\em conservative Borel right process} (BRP) $X=(\Omega,{\mathscr F},({\mathscr F}_{t})_{t\geq 0},(X_t)_{t\geq 0},({\mathbb P}_x)_{x\in E})$.
A BRP is a continuous-time, time-homogeneous Markov process taking values in a Borel subset $E$ of some metric space (so that $E$ is equipped with the Borel sigma-algebra ${\mathscr E}$; in applications we can think of $E$ as a Borel subset of the Euclidean space ${\mathbb R}^d$), having right-continuous paths and possessing the strong Markov property (i.e., the Markov property extended to stopping times). The probability measure  ${\mathbb P}_x$ governs the behavior of the process $(X_t)_{t\geq 0}$ when started from $X_0=x\in E$ at time zero.
If the process starts from a probability distribution $\mu$, the corresponding measure is denoted by ${\mathbb P}_\mu$.
A statement concerning $\omega\in \Omega$ is said to hold ${\mathbb P}$-almost surely if it is true ${\mathbb P}_x$-almost surely for all $x\in E$. The filtration  $({\mathscr F}_{t})_{t\geq 0}$  in our model is the filtration generated by $X$ completed with ${\mathbb P}_\mu$-null sets for all initial distributions $\mu$ of $X_0$. It is right continuous. $X$ is assumed to be conservative, i.e. ${\mathbb P}_x(X_t\in E)=1$ for each initial $x\in E$ and all $t\geq 0$ (no killing or explosion).

 \citet{cinlar_1980} show that stochastic calculus of semimartingales defined over a BRP can be set up so that all key properties hold simultaneously for all starting points $x\in E$ and, in fact, for all initial distributions $\mu$ of $X_0$.
In particular, an $({\mathscr F}_{t})_{t\geq 0}$-adapted process $S$ is an ${\mathbb P}_x$-semimartingale (local martingale, martingale) {\em simultaneously for all} $x\in E$ and, in fact, for all ${\mathbb P}_\mu$.
We often simply write ${\mathbb P}$ where, in fact, we are dealing with the family of measures $({\mathbb P}_x)_{x\in E}$ indexed by the initial state $x$. Correspondingly, we simply say that a process is a ${\mathbb P}$-semimartingale (local martingale, martingale), meaning that it is a ${\mathbb P}_x$-semimartingale (local martingale, martingale) for each $x\in E$.

%The advantage of working in this generality of Borel right processes is that we can treat processes with discrete state spaces (Markov chains), diffusions in the whole Euclidean space or in a domain with a boundary and some boundary behavior, as well as pure jump and jump-diffusion processes in the whole Euclidean space or in a domain with a boundary, all in a unified fashion.

In this section we make some additional assumptions about the pricing kernel.
\begin{assumption}{\bf (Markovian Pricing Kernel)}
\label{markov_pricing}
The PK $(S_t)_{t\geq 0}$ is assumed to be a positive semimartingale multiplicative functional of $X$, i.e.
$S_{t+\tau}(\omega)=S_t(\omega)S_\tau(\theta_t(\omega)),$
where $\theta_t: \Omega\rightarrow\Omega$ is the shift operator (i.e. $X_\tau(\theta_t(\omega))=X_{t+\tau}(\omega)$), the process of its left limits is assumed to be positive, $S_{-}>0$, $S$ is normalized so that $S_0=1$, and ${\mathbb E}^{\mathbb P}_x[S_t]<\infty$ for all times $t>0$ and every initial state $x\in E$.
\end{assumption}

Under Assumption \ref{markov_pricing} the time-$t$ price of a payoff $f(X_{t+\tau})$ at time $t+\tau$ that depends on the Markovian state at that time can be written in the form:
\be
{\mathbb E}^{\mathbb P}\left[S_{t+\tau} f(X_{t+\tau})/S_t| {\mathscr F}_t\right]={\mathbb E}^{\mathbb P}_{X_t}[S_{\tau}f(X_{\tau})]={\mathscr P}_{\tau}f(X_t),
\eel{Markovianpricing}
where we used the Markov property and time homogeneity of $X$ and the multiplicative property of $S$ and introduced a family of {\em pricing operators} $({\mathscr P}_t)_{t\geq 0}$:
${\mathscr P}_{t}f(x):={\mathbb E}_{x}^{\mathbb P}[S_t f(X_t)],$
where ${\mathbb E}_{x}^{\mathbb P}$ denotes the expectation with respect to  ${\mathbb P}_x$.
The pricing operator ${\mathscr P}_t$ maps the payoff $f$ as a function of the state at the payoff time $t$ into its present value at time zero as a function of the initial state $X_0=x$.

%{\bf Positive Eigenfunctions of Markovian Pricing Operators, Eigen-factorizations, and Eigen-Measures.}
Suppose the pricing operators $\mathscr{P}_t$ possess a positive eigenfunction $\pi$ satisfying
\be
\mathscr{P}_t\pi(x)=e^{-\lambda t}\pi(x)
\ee
for  some $\lambda\in\mathbb{R}$ and all $t>0$ and $x\in E$.
%\footnote{The constant $-\lambda$ is the eigenvalue of the generator of the semigroup of pricing operators $({\mathscr P}_t)_{t\geq 0}$. We will not have the need for generators in our exposition here. See \citet{hansen_2009} and \citet{linetsky_2014_cont} for details on generators.}
The key insight of \citet{hansen_2009} is that then the
pricing kernel admits a factorization
\be
S_t=M_t^\pi e^{-\lambda t}\pi(X_0)/\pi(X_t)
\eel{eigenfactorization}
into a transition-independent multiplicative functional $e^{-\lambda t}\pi(X_0)/\pi(X_t)$ and a positive martingale multiplicative functional $M_t^\pi=S_t e^{\lambda t}\pi(X_t)/\pi(X_0)$ of $X$.
%We call this factorization the {\em eigen-factorization} associated with the eigenfunction $\pi$.
Since $M_t^\pi$ is a positive $\mathbb{P}$-martingale starting from one, \citet{hansen_2009} define a new probability measure $\mathbb{Q}^\pi$  associated with the eigenfunction $\pi$ by:
$\mathbb{Q}^\pi|_{\mathscr{F}_t}=M_t^\pi\mathbb{P}|_{\mathscr{F}_t}.$
%We call it the {\em eigen-measure}.
The pricing operator can then be expressed as:
\be
\mathscr{P}_t f(x)=e^{-\lambda t}\pi(x)\mathbb{E}_x^{\mathbb{Q}^\pi}\left[f(X_t)/\pi(X_t)\right],
\eel{pricingopqpi}
where $\mathbb{E}_x^{\mathbb{Q}^\pi}$ is the expectation with respect to the {\em eigen-measure} $\mathbb{Q}^\pi_x$ with $X_0=x$. %Under ${\mathbb Q}^\pi$ the pricing kernel takes the transition independent form with $h(x)=1/\pi(x)$, where $\pi(x)$ is the positive eigenfunction.

%We refer the reader to \citet{hansen_2009} and to \citet{linetsky_2014_cont} for further details on factorizations of multiplicative pricing kernels.

%{\bf The Long Forward Measure as an Eigen-Measure}
Our key result is that, if the long forward measure ${\mathbb L}$ exists in a Markovian environment, then, under some regularity assumptions, it is necessarily identified with one of the eigen-measures. This result naturally links our Theorem \ref{implication_L1} and Theorem \ref{asym_long} in semimartingale environments with the Perron-Frobenius theory of Markovian pricing operators of \citet{hansen_2009}. Specifically, we show that in Markovian environments in the long-term factorization \eqref{ltpi} the semimartingale $\pi_t$ takes the form $\pi_t=\frac{\pi(X_t)}{\pi(X_0)},$ where $\pi(x)$ is a positive eigenfunction of the pricing operator ${\mathscr P}_t$ with the eigenvalue $e^{-\lambda t}$.

We start with the observation that, by Eq.\eqref{Markovianpricing}, in a Markovian setting the zero-coupon bond valuation process can be written in terms of the Markov state as follows,
$P_t^T=({\mathscr P}_{T-t}1)(X_t)=P(T-t,X_t)$ for $0\leq t\leq T$,
where $P(t,x)={\mathbb E}_x^{\mathbb P}[S_t]$ is the bond pricing function.
We also note that the valuation functionals $B_t^T(x)$ tracking gross returns from time zero to time $t$ on investing one unit of account at time zero in pure discount bonds and rolling over as in Section 2 now depend on the state $X_0=x$.

In Appendix \ref{appendix_markov}, it is shown that when Eq.\eqref{PKL1} holds under $\mathbb{P}_x$ for each $x\in E$, then $P(T-t, X_t)/P(T,x)$ converges in probability under $\mathbb{P}_x$ for each $x$. The following theorem shows that if we strength convergence in probability to pointwise convergence, we have the desired result.

\begin{theorem} {\bf (Long Forward Measure as an Eigen-Measure)}
\label{L_equal_eigen}
Suppose the pricing kernel satisfies Assumption \ref{markov_pricing} and Eq.\eqref{PKL1} holds under ${\mathbb P}_x$ for
each initial state $x\in E$. In addition suppose that the function $P(T-t,y)/P(T,x)$ converges to a positive limit as $T\rightarrow \infty$ for each fixed $t>0$ and $x,y\in E$.
Then the long bond valuation functional is identified with a positive multiplicative functional of $X$ in the transition independent form:
\be
B_t^\infty(x)=e^{\lambda_L t}\pi_L(X_t)/\pi_L(x),
\eel{longbondmarkovian}
where $\pi_L(x)$ is a positive eigenfunction of the pricing operators
$({\mathscr P}_t)_{t\geq 0}$ with the eigenvalues $e^{-\lambda_L t}$ for some $\lambda_L \in {\mathbb R}$,
the pricing kernel possesses a Hansen-Scheinkman eigen-factorization Eq.\eqref{eigenfactorization}, and the long forward measure ${\mathbb L}$ is identified with the Hansen-Scheinkman eigen-measure ${\mathbb Q}^{\pi_L}$.
\end{theorem}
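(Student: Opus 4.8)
The plan is to feed the Markov structure into Theorems \ref{implication_L1} and \ref{asym_long}, applied under each $\mathbb{P}_x$, so as to compute the relevant limits of bond prices explicitly. First, note that the hypotheses of Theorem \ref{asym_long} hold under $\mathbb{P}_x$ for every $x\in E$: Eq.\eqref{PKL1} is assumed under $\mathbb{P}_x$, and $P_0^{T-t}/P_0^T=P(T-t,x)/P(T,x)$ has a positive finite limit by the additional hypothesis taken with $y=x$. Hence for each $x$ there is a constant $\lambda_x$ with $P(T-t,x)/P(T,x)\to e^{\lambda_x t}$. Writing $c(y,x):=\lim_{T\to\infty}P(T,y)/P(T,x)$ (the $t=0$ case of the hypothesis, a number in $(0,\infty)$), multiplicativity of limits gives the cocycle identity $c(y,z)\,c(z,x)=c(y,x)$. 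Decomposing
\[
\frac{P(T-t,y)}{P(T,x)}=\frac{P(T-t,y)}{P(T-t,z)}\cdot\frac{P(T-t,z)}{P(T,z)}\cdot\frac{P(T,z)}{P(T,x)}
\]
and letting $T\to\infty$ first with $z=x$ and then with $z=y$ yields $c(y,x)e^{\lambda_x t}=c(y,x)e^{\lambda_y t}$, so $\lambda_x$ is independent of $x$; call it $\lambda_L$. Fixing a reference state $x_0$ and putting $\pi_L(x):=c(x,x_0)=\lim_{T\to\infty}P(T,x)/P(T,x_0)$, the cocycle identity gives $c(y,x)=\pi_L(y)/\pi_L(x)$; here $\pi_L$ is strictly positive and ${\mathscr E}$-measurable, being a pointwise limit of the measurable maps $y\mapsto P(T,y)={\mathscr P}_T 1(y)$. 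In sum, $\lim_{T\to\infty}P(T-t,y)/P(T,x)=e^{\lambda_L t}\pi_L(y)/\pi_L(x)$ for all $t>0$ and $x,y\in E$.

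Second, I would identify the long bond. For $T\ge t$ the roll-over value \eqref{Tbond} collapses to $B_t^T=P_t^T/P_0^T=P(T-t,X_t)/P(T,x)$, so, evaluating at a fixed state $X_t(\omega)=y$, the previous display gives $B_t^T\to e^{\lambda_L t}\pi_L(X_t)/\pi_L(x)$ pointwise in $\omega$. By Theorem \ref{implication_L1}(ii) the same processes $B_t^T$ converge to $B_t^\infty$ uniformly on compacts in probability, hence in $\mathbb{P}_x$-probability for fixed $t$ and therefore almost surely along a subsequence; matching the two limits, and using right-continuity in $t$ of both sides, gives $B_t^\infty(x)=e^{\lambda_L t}\pi_L(X_t)/\pi_L(x)$, $\mathbb{P}_x$-a.s.\ simultaneously for all $t$. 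This is the asserted transition-independent multiplicative form \eqref{longbondmarkovian}, and it also identifies the semimartingale $\pi_t$ of Theorem \ref{asym_long} with $\pi_L(X_t)/\pi_L(X_0)$.

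Third, the eigenfunction property. By Theorem \ref{implication_L1}(i),(iii), $M_t^\infty=S_tB_t^\infty$ is a positive $\mathbb{P}$-martingale with $M_0^\infty=1$. Taking $\mathbb{P}_x$-expectations of $M_t^\infty=S_t e^{\lambda_L t}\pi_L(X_t)/\pi_L(x)$ gives $\mathbb{E}^{\mathbb P}_x[S_t\pi_L(X_t)]=e^{-\lambda_L t}\pi_L(x)$, i.e.\ ${\mathscr P}_t\pi_L(x)=e^{-\lambda_L t}\pi_L(x)$ for all $t>0$ and $x\in E$; in particular ${\mathscr P}_t\pi_L$ is finite, so $\pi_L$ is a genuine positive eigenfunction of $({\mathscr P}_t)_{t\ge0}$ with eigenvalue $e^{-\lambda_L t}$. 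Comparing $M_t^\infty=S_t e^{\lambda_L t}\pi_L(X_t)/\pi_L(X_0)$ with the definition $M_t^\pi=S_t e^{\lambda t}\pi(X_t)/\pi(X_0)$ shows $M_t^\infty=M_t^{\pi_L}$; hence $S_t=(1/B_t^\infty)M_t^\infty$ is exactly the Hansen--Scheinkman factorization \eqref{eigenfactorization} for $(\lambda_L,\pi_L)$, and since both $\mathbb{L}$ and $\mathbb{Q}^{\pi_L}$ are defined on each $\mathscr{F}_t$ by the same Radon--Nikodym density $M_t^\infty=M_t^{\pi_L}$, they coincide.

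I expect the first paragraph to be the crux: upgrading the a priori state-dependent exponents $\lambda_x$ to a single constant $\lambda_L$, and simultaneously distilling one ${\mathscr E}$-measurable eigenfunction $\pi_L$ out of the array of limiting bond-price ratios, is what makes the multiplicative factorization go through, and the measurability of $\pi_L$ --- inherited from measurability of $x\mapsto{\mathscr P}_T1(x)$ in the Borel-right-process setting --- is precisely what licenses substituting the random state $X_t$. By comparison, reconciling the pathwise limits the Markov structure produces with the ucp/$L^1$ limits furnished by Theorems \ref{implication_L1} and \ref{asym_long} is routine (a subsequence argument), and the remaining steps are algebra.
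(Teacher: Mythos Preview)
Your argument is correct in substance and follows essentially the same route as the paper: extract an exponential rate from the diagonal limits $P(T-t,x)/P(T,x)$, show it is state-independent via a multiplicative (cocycle) identity among the limiting bond-price ratios, fix a reference point to define $\pi_L$, identify $B_t^\infty$ by matching the pointwise limit with the ucp limit, and read off the eigenfunction property from the martingale $M_t^\infty=S_tB_t^\infty$. The only organizational difference is that you invoke Theorem~\ref{asym_long}(i) (hence Karamata) to get the exponential form $e^{\lambda_x t}$, whereas the paper works directly with $b^\infty(t,x,y):=\lim_T P(T-t,y)/P(T,x)$, derives $b^\infty(t+s,y,z)=b^\infty(s,y,x)\,b^\infty(t,x,z)$, and solves Cauchy's functional equation in $t$ (using measurability of $t\mapsto b^\infty(t,x,x)$) to obtain the exponential. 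Your route reuses prior machinery; theirs is self-contained within the Markovian section.

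Two technical points need tightening. First, your ``$t=0$ case'' $c(y,x)=\lim_T P(T,y)/P(T,x)$ is not literally covered by the hypothesis (stated for $t>0$); you should derive it, e.g.\ from $P(T,y)/P(T,x)=\big[P((T{+}t){-}t,y)/P(T{+}t,x)\big]\big/\big[P((T{+}t){-}t,x)/P(T{+}t,x)\big]$. Second, and more substantively, you claim $\pi_L$ is $\mathscr{E}$-measurable because $y\mapsto\mathscr{P}_T1(y)$ is. In the paper's Borel-right-process setting this is not assumed: the pricing operator is not required to map Borel functions to Borel functions, and the paper proves a separate lemma showing $P(t,x)$ is only jointly $\mathcal{B}(\mathbb{R}_+)\otimes\mathscr{E}^*$-measurable (universally measurable in $x$), with an explicit remark that $\pi_L$ is in general only $\mathscr{E}^*$-measurable and that $X$ may only be a right (not Borel right) process under $\mathbb{L}$. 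Your argument survives verbatim once $\mathscr{E}$ is replaced by $\mathscr{E}^*$, but the claim as written overstates the available regularity.
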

The proof is given in Appendix C and relies on the Markov property, measurability properties of the bond pricing function, and Cauchy's multiplicative functional equation.
This theorem refines Theorem \ref{asym_long} in Markovian environments by further identifying the process $\pi_t$ with $\pi_L(X_t)/\pi_L(X_0),$ where $\pi_L(x)$ is a positive eigenfunction of the pricing operator.

Positive eigenfunctions of Markovian pricing operators are in general not unique.
For a given pricing kernel $S$ satisfying Assumption \ref{markov_pricing} there may be other positive eigenfunctions not associated with the long-term factorization. If the researcher is faced with the problem of determining the long-term factorization by first finding the eigenfunction $\pi_L$, it is useful to have sufficient conditions that single out $\pi_L$ from among all positive eigenfunctions.  \citet{hansen_2009}, \citet{borovicka_2014mis} and \citet{linetsky_2014_cont} show that if the state process $X$ satisfies certain stochastic stability assumptions under the eigen-measure  ${\mathbb Q}^\pi$, then the corresponding eigenfunction is unique.
Moreover, under sufficiently strong stability assumptions it is possible to identify this eigenfunction with the one germane to the long-term factorization.

The weakest among stochastic stability assumptions is recurrence.
 \citet{linetsky_2014_cont} establish that there exists at most one positive eigenfunction such that $X$ is recurrent under $\mathbb{Q}^{\pi}$.
% \citet{linetsky_2014_cont} prove this result under the definition of recurrence used in the general theory of right Markov processes  and give connections with other definitions of recurrence for specific classes of Markov processes
If a positive eigenfunction $\pi$ such that $X$ is recurrent under ${\mathbb Q}^\pi$ exists, \citet{linetsky_2014_cont} call it {\em recurrent eigenfunction}, denote it and the corresponding eigenvalue by $\pi_R$ and $\lambda_R$, and call the associated eigen-measure ${\mathbb Q}^{\pi_R}$ {\em recurrent eigen-measure}.
While the recurrent eigen-measure is unique, recurrence is generally too weak to ensure identification of $\pi_R$ with $\pi_L$ and, hence, the recurrent eigen-measure with the long forward measure.
%It may be the case that while the recurrent eigen-measure exists, the long-forward measure does not, or vice versa. It may also be the case that both measures exist, but are distinct. In Appendix \ref{examples} we provide examples of all of these possibilities
%{\bf Identification of the Long Forward Measure with the Recurrent Eigen-Measure under Exponential Ergodicity.}
The following assumption is sufficient for the identification of the recurrent eigen-measure with the long forward measure.
\begin{assumption}
\label{exp_ergo_assumption}{\bf (Exponential Ergodicity)}
Suppose the Markovian pricing kernel satisfying Assumption \ref{markov_pricing} admits a recurrent eigenfunction $\pi_R$.
Suppose further there exists a probability measure $\varsigma$ on $E$ and some positive constants
$c,\alpha, t_0$ such that under $\mathbb{Q}^{\pi_R}$ the following {\em exponential ergodicity estimate} holds for all Borel functions satisfying $|f|\leq 1$:
\be
\left|\mathbb{E}^{\mathbb{Q}^{\pi_R}}_x\left[f(X_t)/\pi_R(X_t)\right]-c_f\right|
\leq c e^{-\alpha t}/\pi_R(x)
\eel{exp_ergo}
for all $t\geq t_0$ and each $x\in E$, where
$c_f={\mathbb E}^\varsigma[f(Y)/\pi_R(Y)]=\int_E (f(y)/\pi_R(y))\varsigma(dy)$.
\end{assumption}
Sufficient conditions for Eq.\eqref{exp_ergo} for Borel right processes can be found in Theorem 6.1 of \citet{meyn_1993stability}.
\begin{theorem}{\bf (Identification of ${\mathbb L}$ and ${\mathbb Q}^{\pi_R}$ and Long-Term Pricing)}
\label{markov_long}
If Assumption \ref{exp_ergo_assumption} is satisfied, then Eq.\eqref{PKL1} holds with $M_t^\infty=M_t^{\pi_R}$, Theorem \ref{implication_L1} applies, the long bond is given by Eq.\eqref{longbondmarkovian} with $\pi_L=\pi_R$,
and the recurrent eigen-measure coincides with the long forward measure, $\mathbb{Q}^{\pi_R}=\mathbb{L}.$ Moreover, the long-term pricing formula of \citet{hansen_2009} holds for any bounded payoff $f(X_t)$ at time $t\geq t_0$ with the following error estimate:
\be
\left|{\mathscr P}_tf(x)-c_f e^{-\lambda_L t}\pi_L(x) \right|
\leq c \|f\|_{L^\infty}e^{-(\lambda_L+\alpha) t}
\eel{ltpricing}
for all times $t\geq t_0$ and each $x\in E$, where $\|f\|_{L^\infty}=\sup_{x\in E}|f(x)|$.
\end{theorem}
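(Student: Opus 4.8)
The plan is to verify hypothesis \eqref{PKL1} of Theorem \ref{implication_L1} with the explicit martingale $M_t^\infty=M_t^{\pi_R}=S_te^{\lambda_R t}\pi_R(X_t)/\pi_R(X_0)$ supplied by the Hansen-Scheinkman eigen-factorization \eqref{eigenfactorization} applied to the recurrent eigenfunction $\pi_R$, and then to read off all remaining conclusions from Theorem \ref{implication_L1} together with \eqref{eigenfactorization}; the error estimate \eqref{ltpricing} will follow at the end from \eqref{pricingopqpi} and \eqref{exp_ergo}. To carry out the first step, note that by the multiplicative-functional property of $S$ and the Markov property (i.e.\ \eqref{Markovianpricing} with $f\equiv1$) one has, under $\mathbb{P}_x$, $\mathbb{E}_t^{\mathbb{P}}[S_T]=S_t\,P(T-t,X_t)$ and $\mathbb{E}^{\mathbb{P}}[S_T]=P(T,x)$, where $P(t,x)=\mathscr{P}_t1(x)=\mathbb{E}_x^{\mathbb{P}}[S_t]$. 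Inserting the representation $P(t,x)=e^{-\lambda_R t}\pi_R(x)\,\mathbb{E}_x^{\mathbb{Q}^{\pi_R}}[1/\pi_R(X_t)]$ obtained from \eqref{pricingopqpi} with $f\equiv1$ yields $$\frac{\mathbb{E}_t^{\mathbb{P}}[S_T]}{\mathbb{E}^{\mathbb{P}}[S_T]}=M_t^{\pi_R}\,\frac{\mathbb{E}_{X_t}^{\mathbb{Q}^{\pi_R}}[1/\pi_R(X_{T-t})]}{\mathbb{E}_x^{\mathbb{Q}^{\pi_R}}[1/\pi_R(X_T)]}.$$

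Now apply the exponential ergodicity estimate \eqref{exp_ergo} with $f\equiv1$, so that $c_f=c_1:=\int_E\pi_R(y)^{-1}\varsigma(dy)>0$: the denominator converges deterministically to $c_1$, while $|\mathbb{E}_{X_t}^{\mathbb{Q}^{\pi_R}}[1/\pi_R(X_{T-t})]-c_1|\le c\,e^{-\alpha(T-t)}/\pi_R(X_t)$ once $T-t\ge t_0$. Hence, for $T$ large enough that $T-t\ge t_0$ and the denominator exceeds $c_1/2$, the left-hand side above differs from $M_t^{\pi_R}$ by at most $(2c/c_1)\,M_t^{\pi_R}\bigl(e^{-\alpha(T-t)}/\pi_R(X_t)+e^{-\alpha T}/\pi_R(x)\bigr)$. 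Taking $\mathbb{E}_x^{\mathbb{P}}$ and using that $M_t^{\pi_R}$ is a $\mathbb{P}_x$-martingale (so $\mathbb{E}_x^{\mathbb{P}}[M_t^{\pi_R}]=1$) together with $\mathbb{E}_x^{\mathbb{P}}[M_t^{\pi_R}/\pi_R(X_t)]=\mathbb{E}_x^{\mathbb{Q}^{\pi_R}}[1/\pi_R(X_t)]=e^{\lambda_R t}P(t,x)/\pi_R(x)<\infty$ (finite by Assumption \ref{markov_pricing}), the bound vanishes as $T\to\infty$ for each fixed $t>0$ and $x\in E$, which is precisely the $L^1$ convergence \eqref{PKL1}. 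I expect this to be the only genuine obstacle: restarting the chain from the random state $X_t$ turns the ergodicity error into the random factor $1/\pi_R(X_t)$, and one must switch back to $\mathbb{Q}^{\pi_R}$ to see that its $\mathbb{P}_x$-expectation is controlled by $\mathbb{E}_x^{\mathbb{P}}[S_t]<\infty$.

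With \eqref{PKL1} established, Theorem \ref{implication_L1} applies verbatim: $M_t^\infty=M_t^{\pi_R}$ is a positive martingale, the long bond $B^\infty$ exists (as the semimartingale-topology limit of the $B^T$), $S_t=M_t^{\pi_R}/B_t^\infty$, and $\mathbb{L}|_{\mathscr{F}_t}=M_t^{\pi_R}\mathbb{P}|_{\mathscr{F}_t}$ for all $t$. Since $\mathbb{Q}^{\pi_R}|_{\mathscr{F}_t}=M_t^{\pi_R}\mathbb{P}|_{\mathscr{F}_t}$ by definition, $\mathbb{L}=\mathbb{Q}^{\pi_R}$. Comparing $S_t=M_t^{\pi_R}/B_t^\infty$ with \eqref{eigenfactorization} for $\pi_R$, i.e.\ $S_t=M_t^{\pi_R}e^{-\lambda_R t}\pi_R(X_0)/\pi_R(X_t)$, forces $B_t^\infty=e^{\lambda_R t}\pi_R(X_t)/\pi_R(X_0)$, which is \eqref{longbondmarkovian} with $\pi_L=\pi_R$ and $\lambda_L=\lambda_R$. (Alternatively, the computation above also gives $P(T-t,y)/P(T,x)\to e^{\lambda_R t}\pi_R(y)/\pi_R(x)$ pointwise, so that Theorem \ref{L_equal_eigen} applies directly and identifies the eigenfunction.)

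Finally, for the error estimate, let $f$ be a bounded Borel function; \eqref{ltpricing} is trivial when $f\equiv0$, so assume $\|f\|_{L^\infty}>0$ and set $g=f/\|f\|_{L^\infty}$, so that $|g|\le1$ and $c_f=\|f\|_{L^\infty}c_g$. By \eqref{pricingopqpi} with $\pi=\pi_R$ and linearity, $$\mathscr{P}_tf(x)-c_fe^{-\lambda_R t}\pi_R(x)=\|f\|_{L^\infty}\,e^{-\lambda_R t}\pi_R(x)\bigl(\mathbb{E}_x^{\mathbb{Q}^{\pi_R}}[g(X_t)/\pi_R(X_t)]-c_g\bigr),$$ and \eqref{exp_ergo} applied to $g$ bounds the parenthesis in absolute value by $c\,e^{-\alpha t}/\pi_R(x)$ for $t\ge t_0$; multiplying through and recalling $\lambda_L=\lambda_R$ yields $|\mathscr{P}_tf(x)-c_fe^{-\lambda_L t}\pi_L(x)|\le c\|f\|_{L^\infty}e^{-(\lambda_L+\alpha)t}$ for all $t\ge t_0$ and $x\in E$, which completes the proof.
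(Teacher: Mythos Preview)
Your proof is correct and follows essentially the same route as the paper: both express $M_t^T/M_t^{\pi_R}$ as the ratio $\mathbb{E}_{X_t}^{\mathbb{Q}^{\pi_R}}[1/\pi_R(X_{T-t})]/\mathbb{E}_x^{\mathbb{Q}^{\pi_R}}[1/\pi_R(X_T)]$, apply the exponential ergodicity estimate \eqref{exp_ergo} with $f\equiv1$ to numerator and denominator, and control the resulting error via the integrability $\mathbb{E}_x^{\mathbb{Q}^{\pi_R}}[1/\pi_R(X_t)]=e^{\lambda_R t}P(t,x)/\pi_R(x)<\infty$. The only cosmetic difference is that the paper first rewrites the $L^1$ condition under $\mathbb{Q}^{\pi_R}$ and invokes dominated convergence, whereas you bound $\mathbb{E}_x^{\mathbb{P}}[|M_t^T-M_t^{\pi_R}|]$ directly under $\mathbb{P}$ and change measure only for the integrability step; your derivation of \eqref{ltpricing} from \eqref{pricingopqpi} and \eqref{exp_ergo} is exactly what the paper indicates.
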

The exponential ergodicity estimate \eqref{exp_ergo} ensures that the distribution of $X_t$ under $\mathbb{Q}^{\pi_R}$ converges to the limiting distribution $\varsigma$ sufficiently fast so that the expectation $\mathbb{E}^{\mathbb{Q}^{\pi_R}}_x\left[f(X_t)/\pi_R(X_t)\right]$ for any bounded function $f$ converges to the expectation under the limiting distribution at the exponential rate $\alpha>0$.

Exponential ergodicity is sufficient to identify $\pi_R$ with $\pi_L$. It is also sufficient for the conditions in Theorem \ref{asym_long}, and thus the results in Theorem \ref{implication_L1} and Theorem \ref{asym_long} hold. The long-term pricing formula \eqref{ltpricing} complements the long term pricing formula in Proposition 7.1 of \citet{hansen_2009} (see also Section 4 of \citet{borovicka_2014mis}) by also providing the exponential convergence rate to the long term limit. Indeed, Eq.\eqref{ltpricing} is obtained from Eq.\eqref{exp_ergo} via Eq.\eqref{pricingopqpi}.

When Assumption \ref{exp_ergo_assumption} is not satisfied, it may be the case that $\mathbb{Q}^{\pi_R}$ exists while $\mathbb{L}$ does not, or $\mathbb{L}$ exists while $\mathbb{Q}^{\pi_R}$ does not, or that both $\mathbb{L}$ and $\mathbb{Q}^{\pi_R}$ exist but are distinct. In order to identify $\mathbb{Q}^{\pi_R}$ with $\mathbb{L}$ by invoking stochastic stability assumptions weaker than Assumption 5.2, we need to ensure that the constant payoff (of the zero coupon bond) is within the scope of the long term pricing formula, i.e. $\int_E \frac{1}{\pi_R(y)}\varsigma(dy)<\infty$.
If this condition fails, the long forward measure fails to exist even though the long term pricing formula holds for a class of payoffs that does not include constants.
 \citet{borovicka_2014mis} explicitly impose this integrability condition in their Section 4, thus ruling out such cases.
Nevertheless, even under this integrability condition, we  can generally only show almost sure convergence of $B_t^T$ to $B_t^\infty$ for each $t$, which does not imply the ucp convergence of $B_t^T$ and the strong convergence of $T$-forward measures $\mathbb{Q}^T$ to ${\mathbb L}$. The exponential ergodicity Assumption 5.2 ensures the $L^1$ convergence condition of Theorem \ref{implication_L1} and, thus, ucp and stronger semimartingale convergence of processes $B_t^T$ to the long bond and convergence in total variation of $\mathbb{Q}^T$ to ${\mathbb L}$.

%{\bf Remark 5.1. Multiplicative Factorization and Additive Decomposition.} \citet{hansen_2012} emphasizes that the factorization of positive multiplicative functionals is a multiplicative counterpart of the decomposition of additive functionals into transitory and martingale components which, in turn, is a continuous time analog of the martingale extraction familiar in the time series literature (e.g. \citet{beveridge1981new}).  In Appendix \ref{appendix_additive} we formally develop this point of view in the semimartingale framework of this paper. In particular, we give a sufficient condition that ensures that an integrable semimartingale possesses a decomposition into a transitory component and a martingale. Furthermore, when the information filtration is generated by a Markov state and the semimartingale is its additive functional, we show that this decomposition reduces to the form in Theorem 3.1 in \citet{hansen_2012}, which serves as the additive analog of Theorem \ref{L_equal_eigen} above.

\section{Concluding Remarks}

This paper provided a unified treatment of the
long-term factorization of the pricing kernel of \citet{alvarez_2005using}, \citet{hansen_2009} and \citet{hansen_2012} in the semimartingale setting without Markovian assumptions. The transitory component discounts at the stochastic rate of return on the long bond and is further factorized into discounting at the asymptotic long bond yield $\lambda$ and a positive semimartingale that, in the Markovian setting, is expressed in terms of a positive eigenfunction of the pricing operator with the eigenvalue $e^{-\lambda t}$.
The permanent component is a martingale that accomplishes a change of probabilities to the long forward measure, the limit of $T$-forward measures.
The long forward measure is further interpreted as the long-term risk-neutral measure, as the long-term risk premia for stochastically growing cash flows vanish under ${\mathbb L}$.
The approach of this paper via semimartingale convergence to the long-term limit extends and complements the operator approach of \citet{hansen_2009} and unifies it with the approach of \citet{alvarez_2005using}.

The volatility of the permanent component drives the wedge between data-generating and long forward probabilities.
 \citet{linetsky2016bond} apply this methodology to the empirical analysis of the US Treasury bond market and show that the martingale component is highly volatile and controls the term structure of the risk-return trade-off in the bond market. These results on the economic significance of the martingale component complement the non-parametric  results of \citet{alvarez_2005using} and \citet{bakshi_2012} based on bounds and of \citet{borovicka_2014mis} and \citet{christensen2014nonparametric} based on structural asset pricing models calibrated to macro-economic fundamentals and impose significant economic restrictions on asset pricing models.

\bibliography{mybib7}

\newpage
\setcounter{page}{1}
\appendix
\begin{center}
{\Large Supplementary Appendix to\\
Long Term Risk: A Martingale Approach}\\
\enskip\\
Likuan Qin and Vadim Linetsky
\end{center}

\section{Proofs for Section \ref{long_term_section}}
\label{proof_l1}
We first recall some results about semimartingale topology originally introduced by \citeappendix{emery_1979topologie} (see \citeappendix{czichowsky_2011closedness}, \citeappendix{kardaras_2013closure} and \citeappendix{cuchiero_2014convergence} for recent applications in mathematical finance).
%By Theorem II.5 in \citeappendix{memin_1980espaces}, the semimartingale convergence is invariant under locally equivalent measure transformations (the Emery's distance will change as we change measure).
The semimartingale topology is stronger than the topology of uniform convergence in probability on compacts (ucp). In the latter case the supremum in Eq.\eqref{emeryd} is only taken over integrands in the form $\eta_t=1_{[0,s]}(t)$ for every $s> 0$:
\be
d_{\text{ucp}}(X,Y)=\sum_{n\geq 1} 2^{-n} {\mathbb E}^\mathbb{P}[1 \wedge \sup_{s\leq n}|X_s-Y_s| ].
\eel{ucpd}

The following inequality due to Burkholder is useful for proving convergence in the semimartingale topology in Theorem \ref{implication_L1} (see \citeappendix{meyer_1972martingales} Theorem 47, p.50 for discrete martingales and \citeappendix{cuchiero_2014convergence} for continuous martingales, where a proof is provided inside the proof of their Lemma 4.7).
\begin{lemma}
\label{meyer_S}
For every martingale $X$ and every
predictable process $\eta$ bounded by one, $|\eta_t|\leq 1$, it holds that
\be
a{\mathbb P}\left(\sup_{s\in[0,t]} \left| \int_0^s \eta_u dX_u \right| > a\right)\leq 18 {\mathbb E}^\mathbb{P}[|X_t|]
\eel{Burk}
for all $a\geq 0$ and $t>0$.
\end{lemma}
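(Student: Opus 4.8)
The plan is either to quote this known inequality from the references already cited --- \citeappendix{meyer_1972martingales}, Theorem~47, for discrete-time martingales (which transfers to the present setting by a standard approximation of the stochastic integral by Riemann sums along refining partitions, using that the sampled martingale $X_{t_k}$ is a discrete martingale with $\sup_k{\mathbb E}^{\mathbb P}|X_{t_k}|\le{\mathbb E}^{\mathbb P}|X_t|$) and \citeappendix{cuchiero_2014convergence} for the continuous case --- or, for completeness, to reproduce the argument, which I outline now. First I would dispose of trivialities: the case $a=0$ is immediate, and we may assume ${\mathbb E}^{\mathbb P}[|X_t|]<\infty$, since otherwise there is nothing to prove; then $(X_s)_{s\le t}$ is a uniformly integrable martingale, $g_s:=\int_0^s\eta_u\,dX_u$ is a local martingale with RCLL paths, and $g_t^\ast:=\sup_{s\le t}|g_s|<\infty$ almost surely. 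The point to keep in mind is that one cannot bound ${\mathbb P}(g_t^\ast\ge a)$ by routing through ${\mathbb E}^{\mathbb P}\big[[X]_t^{1/2}\big]$ or ${\mathbb E}^{\mathbb P}\big[\sup_{s\le t}|X_s|\big]$: neither of these need be finite merely because ${\mathbb E}^{\mathbb P}[|X_t|]<\infty$, so a decomposition of $X$ tied to the level $a$ is unavoidable.

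The route is a Gundy-type (equivalently Davis-type) decomposition of the $L^1$ martingale $(X_s)_{s\le t}$ at level $a$: write $X=X^{(1)}+X^{(2)}+X^{(3)}$, where $X^{(1)},X^{(2)},X^{(3)}$ are martingales on $[0,t]$ vanishing at $0$ (the constant $X_0$ being absorbed into $X^{(1)}$) with $X^{(1)}$ bounded and ${\mathbb E}^{\mathbb P}[(X^{(1)}_t)^2]\le c_1\,a\,{\mathbb E}^{\mathbb P}[|X_t|]$, with $X^{(2)}$ of integrable variation and ${\mathbb E}^{\mathbb P}\big[\int_0^t|dX^{(2)}_u|\big]\le c_2\,{\mathbb E}^{\mathbb P}[|X_t|]$, and with $X^{(3)}\in L^2$ and ${\mathbb E}^{\mathbb P}[(X^{(3)}_t)^2]\le c_3\,a\,{\mathbb E}^{\mathbb P}[|X_t|]$, for universal constants $c_1,c_2,c_3$. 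Setting $g^{(i)}_s:=\int_0^s\eta_u\,dX^{(i)}_u$, so that $g=g^{(1)}+g^{(2)}+g^{(3)}$, and using $|\eta|\le 1$: for $i\in\{1,3\}$ the process $g^{(i)}$ is an $L^2$ martingale from $0$ with $[g^{(i)}]_t=\int_0^t\eta_u^2\,d[X^{(i)}]_u\le[X^{(i)}]_t$, so Markov's inequality, Doob's $L^2$ maximal inequality and the $L^2$ isometry, together with $[g^{(i)}]_t\le[X^{(i)}]_t$ and $X^{(i)}_0=0$, give
\[
{\mathbb P}\Big(\sup_{s\le t}|g^{(i)}_s|\ge\tfrac{a}{3}\Big)
\le\frac{9}{a^2}\,{\mathbb E}^{\mathbb P}\Big[\big(\sup_{s\le t}|g^{(i)}_s|\big)^2\Big]
\le\frac{36}{a^2}\,{\mathbb E}^{\mathbb P}\big[[g^{(i)}]_t\big]
\le\frac{36}{a^2}\,{\mathbb E}^{\mathbb P}\big[(X^{(i)}_t)^2\big]
\le\frac{36\,c_i}{a}\,{\mathbb E}^{\mathbb P}[|X_t|],
\]
while for $i=2$ we have $\sup_{s\le t}|g^{(2)}_s|\le\int_0^t|dX^{(2)}_u|$ pathwise, hence ${\mathbb P}\big(\sup_{s\le t}|g^{(2)}_s|\ge\tfrac{a}{3}\big)\le\tfrac{3c_2}{a}\,{\mathbb E}^{\mathbb P}[|X_t|]$. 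Since $\{g_t^\ast\ge a\}\subseteq\bigcup_{i=1}^{3}\{\sup_{s\le t}|g^{(i)}_s|\ge a/3\}$, the union bound yields $a\,{\mathbb P}(g_t^\ast\ge a)\le(36c_1+3c_2+36c_3)\,{\mathbb E}^{\mathbb P}[|X_t|]$; an optimized version of this argument, carried out in detail in \citeappendix{meyer_1972martingales}, produces the stated constant $18$, and for the use made of this lemma in the proof of Theorem \ref{implication_L1} any finite universal constant would in fact suffice.

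The main obstacle is the decomposition step itself --- the continuous-time Gundy/Davis decomposition of an $L^1$ martingale at a level, with the stated $L^2$ and $L^1$-variation controls. The delicate point, and the reason the variation of $X^{(2)}$ can be controlled in $L^1$ by ${\mathbb E}^{\mathbb P}[|X_t|]$, is that the jumps peeled off into $X^{(2)}$ must be those exceeding a threshold proportional to the running maximum $\sup_{u<s}|X_u|$, not a fixed level: a fixed-level cutoff is not $L^1$-controlled, as one sees already for a martingale built from many large compensated Poisson jumps. Once $X^{(2)}$ is removed, $X^{(1)}$ and $X^{(3)}$ are obtained by stopping and truncating the remainder at level $a$, and tracking sizes through this construction both gives the norm bounds used above and fixes the constant. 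Since this construction together with its bookkeeping is precisely what is done in \citeappendix{meyer_1972martingales}, and the continuous-time version needed here is available in \citeappendix{cuchiero_2014convergence}, the cleanest presentation is to invoke those references after recording the reduction above.
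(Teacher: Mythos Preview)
Your proposal is correct and aligns with the paper's treatment: the paper does not prove this lemma at all but simply states it with the same two citations you invoke (\citeappendix{meyer_1972martingales}, Theorem~47, and \citeappendix{cuchiero_2014convergence}, proof of their Lemma~4.7). Your additional sketch via the Gundy/Davis decomposition is a faithful outline of the standard argument behind those references and goes beyond what the paper itself provides.
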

We will also use the following result (see \citeappendix{kardaras_2013closure}, Proposition 2.10).
\begin{lemma}
\label{product_S}
If $X^n\xrightarrow{\rm {\cal S}} X$ and $Y^n\xrightarrow{\rm {\cal S}} Y$, then $X^n Y^n\xrightarrow{\rm {\cal S}} XY$.
\end{lemma}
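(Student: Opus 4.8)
The plan is to reduce the statement to the standard continuity properties of stochastic integration and of quadratic variation in Emery's topology, via integration by parts. Since Emery's topology is metrizable, it suffices to show that every subsequence of $(X^nY^n)$ admits a further subsequence converging to $XY$ in $\mathcal S$, which lets me pass to subsequences freely. Recalling from the discussion above that $\mathcal S$-convergence is stronger than ucp convergence, $X^n\to X$ and $Y^n\to Y$ in $\mathcal S$ give $X^n_-\to X_-$, $Y^n_-\to Y_-$ ucp and $X^n_0\to X_0$, $Y^n_0\to Y_0$ in probability; passing to a subsequence I may and do assume all these ucp convergences hold uniformly on compacts almost surely.

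By the integration-by-parts formula for semimartingales,
\[
X^n_tY^n_t=X^n_0Y^n_0+\int_0^t X^n_{s-}\,dY^n_s+\int_0^t Y^n_{s-}\,dX^n_s+[X^n,Y^n]_t ,
\]
and likewise for $XY$; since the semimartingales form a topological vector space under Emery's topology, it is enough to show each of the four terms converges in $\mathcal S$ to its counterpart. The term $X^n_0Y^n_0\to X_0Y_0$ in probability is the $\mathcal S$-convergence of constant processes. For the two integral terms I would invoke the joint continuity of the stochastic integral in Emery's topology (Emery, Memin): if $H^n\to H$ ucp, the $H^n$ are uniformly dominated by a fixed locally bounded predictable process, and $Z^n\to Z$ in $\mathcal S$, then $H^n\cdot Z^n\to H\cdot Z$ in $\mathcal S$. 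To supply the uniform domination I would stop at the exit times $\rho_m=\inf\{t:|X_t|\vee|Y_t|\ge m\}\wedge m$: on $[0,\rho_m]$ one has $|X_{s-}|\vee|Y_{s-}|\le m$, while $|X^n_{s-}|\le m+\sup_{u\le\rho_m}|X^n_u-X_u|$, and along the chosen subsequence the latter supremum is finite for all $n$ simultaneously (an almost surely convergent sequence of random variables is almost surely bounded); since Emery convergence can be checked after stopping at each $\rho_m$ and these increase to $\infty$, this reduces the integral terms to the dominated case and gives $X^n_-\cdot Y^n\to X_-\cdot Y$ and $Y^n_-\cdot X^n\to Y_-\cdot X$ in $\mathcal S$. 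For the covariation term I would use polarization, $[X^n,Y^n]=\tfrac14\bigl([X^n+Y^n,X^n+Y^n]-[X^n-Y^n,X^n-Y^n]\bigr)$, together with the continuity of the quadratic-variation map $W\mapsto[W,W]$ on $\mathcal S$ (Emery, Memin): from $X^n\pm Y^n\to X\pm Y$ in $\mathcal S$ one gets $[X^n\pm Y^n,X^n\pm Y^n]\to[X\pm Y,X\pm Y]$ in $\mathcal S$, hence $[X^n,Y^n]\to[X,Y]$ in $\mathcal S$. Summing the four limits yields $X^nY^n\to XY$ in $\mathcal S$.

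The step I expect to be the main obstacle is the covariation term: it lies outside the ``integrator continuity'' built into the definition of Emery's topology, so one must bring in the separate (and less elementary) continuity of $W\mapsto[W,W]$, and be careful not to argue circularly --- one cannot introduce $[X^n,Y^n]$ through integration by parts while simultaneously invoking continuity of that same decomposition. A secondary nuisance, flagged above, is the localization needed to turn the locally bounded integrands $X^n_-,Y^n_-$ into uniformly dominated ones so that the stochastic-integral continuity applies; this is why I pass to a subsequence with almost-sure uniform-on-compacts convergence and stop at the exit times $\rho_m$, using that Emery convergence is both preserved by and detected through such stoppings.
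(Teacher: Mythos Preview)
The paper does not supply its own proof of this lemma; it simply cites Kardaras (2013), Proposition~2.10. So there is no ``paper's proof'' to compare against---you are doing strictly more than the paper does.

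Your approach via integration by parts is the standard route and is essentially how the cited result is obtained. The subsequence-with-a.s.-ucp-convergence trick together with stopping at the exit times $\rho_m$ to manufacture a single locally bounded predictable dominator for the integrands $X^n_-,Y^n_-$ is sound, and it feeds correctly into the joint continuity $(H,Z)\mapsto H\cdot Z$ in $\mathcal S$ that you invoke.

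Your worry about circularity in the covariation term is the right thing to flag, but it dissolves once you notice that the continuity of $W\mapsto[W,W]$ in $\mathcal S$ can itself be derived from the integral continuity you already use, without ever appealing to the product lemma. Writing $Z^n:=W^n-W$, one has
\[
[W^n]-[W]=[Z^n,Z^n]+2[Z^n,W],
\]
and by Kunita--Watanabe the total variation of $[Z^n,W]$ on $[0,t]$ is bounded by $[Z^n]_t^{1/2}[W]_t^{1/2}$. So it suffices that $[Z^n]_t\to0$ in probability. But
\[
[Z^n]_t=(Z^n_t)^2-(Z^n_0)^2-2\,(Z^n_-\!\cdot Z^n)_t,
\]
where the first two terms vanish in probability because $Z^n\to0$ ucp and squaring is continuous, while the last vanishes in $\mathcal S$ by the very integral-continuity argument (with the same domination trick) you use for the other two integral terms. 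Hence $[W^n]-[W]$ is a finite-variation process whose total variation tends to $0$ in probability on compacts, which forces $[W^n]\to[W]$ in $\mathcal S$. Polarization then gives $[X^n,Y^n]\to[X,Y]$ in $\mathcal S$, and your four-term decomposition closes. In short, your outline is correct; the only refinement is to note explicitly that the $[W,W]$-continuity you import is itself a corollary of the stochastic-integral continuity, so no circularity is lurking.
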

We will also make use of the following lemma.
\begin{lemma}
\label{L1_martingale}
%(a) Let $X^n$ be a sequence of random variable such that $X^n\xrightarrow{\rm L^1} X^\infty$, then for each $t\geq0$ $\mathbb{E}_t[X^n]\xrightarrow{\rm L^1} \mathbb{E}_t[X^\infty]$.\\
Let $(X_t^n)_{t\geq 0}$ be a sequence of martingales such that   $X_t^n\xrightarrow{\rm L^1}X_t^\infty$ for each $t\geq 0$. Then $(X_t^\infty)_{t\geq 0}$ is a martingale.
\end{lemma}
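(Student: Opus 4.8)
The plan is to verify the three defining properties of a martingale for $(X_t^\infty)_{t\ge 0}$ in turn --- integrability, adaptedness, and the projection property --- using only that $L^1$ convergence is preserved by conditional expectations.

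Integrability is immediate: each $X_t^\infty$ is an $L^1$-limit of the integrable random variables $X_t^n$, so $X_t^\infty\in L^1$ with $\mathbb{E}^{\mathbb P}[|X_t^\infty|]=\lim_{n}\mathbb{E}^{\mathbb P}[|X_t^n|]<\infty$. For adaptedness, recall that $L^1$ convergence implies almost sure convergence along a subsequence; since each $X_t^{n_k}$ is $\mathscr{F}_t$-measurable and the filtration is complete, the almost sure limit $X_t^\infty$ is $\mathscr{F}_t$-measurable, and this is consistent with the standing convention that random variables are identified up to almost sure equivalence.

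The projection property is the substantive point. Fix $0\le s\le t$. Since $X^n$ is a martingale, $\mathbb{E}^{\mathbb P}_s[X_t^n]=X_s^n$ for every $n$. Conditional expectation is an $L^1$-contraction, so $\mathbb{E}^{\mathbb P}\big[\,|\mathbb{E}^{\mathbb P}_s[X_t^n]-\mathbb{E}^{\mathbb P}_s[X_t^\infty]|\,\big]\le \mathbb{E}^{\mathbb P}\big[\,|X_t^n-X_t^\infty|\,\big]\to 0$, i.e. $\mathbb{E}^{\mathbb P}_s[X_t^n]\xrightarrow{\rm L^1}\mathbb{E}^{\mathbb P}_s[X_t^\infty]$; meanwhile $X_s^n\xrightarrow{\rm L^1}X_s^\infty$ by hypothesis. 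By uniqueness of $L^1$ limits, $\mathbb{E}^{\mathbb P}_s[X_t^\infty]=X_s^\infty$ almost surely. (Equivalently, one may test against indicators: for $A\in\mathscr{F}_s$, $\mathbb{E}^{\mathbb P}[X_t^n 1_A]=\mathbb{E}^{\mathbb P}[X_s^n 1_A]$ for all $n$, and since $|\mathbb{E}^{\mathbb P}[(X_u^n-X_u^\infty)1_A]|\le\mathbb{E}^{\mathbb P}[|X_u^n-X_u^\infty|]\to0$ for $u\in\{s,t\}$, letting $n\to\infty$ on both sides gives $\mathbb{E}^{\mathbb P}[X_t^\infty 1_A]=\mathbb{E}^{\mathbb P}[X_s^\infty 1_A]$ for every $A\in\mathscr{F}_s$.) Combining the three properties shows $(X_t^\infty)_{t\ge0}$ is a martingale.

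I do not anticipate any genuine obstacle here; the one place to be careful is the adaptedness of the limit, which is why I invoke almost sure convergence along a subsequence together with completeness of the filtration. No path-regularity statement is being claimed, so the argument is entirely about the one-dimensional marginals $X_t^\infty$ and their conditional expectations, and the $L^1$-contraction property of $\mathbb{E}^{\mathbb P}_s[\cdot]$ does all the work.
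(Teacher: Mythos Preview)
Your proof is correct and follows essentially the same route as the paper: both use the $L^1$-contraction property of conditional expectation (the paper phrases it via Jensen's inequality) to pass the limit through $\mathbb{E}_s[\cdot]$, and then invoke uniqueness of $L^1$ limits to conclude $\mathbb{E}_s[X_t^\infty]=X_s^\infty$. You additionally spell out the adaptedness step via an a.s.-convergent subsequence and completeness of the filtration, which the paper leaves implicit; this is a welcome clarification but not a different argument.
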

\begin{proof}
It is immediate that ${\mathbb E}[|X_t^\infty|]<\infty$ for all $t$. We need to verify that ${\mathbb E}_s[X_t^\infty]=X_s^\infty$ for $t>s\geq 0$.
First we show that from $X_t^n\xrightarrow{\rm L^1}X_t^\infty$ it follows that
\be \mathbb{E}_s[X_t^n]\xrightarrow{\rm L^1}\mathbb{E}_s[X_t^\infty] \eel{l1lemma}
for each
$s<t$. By Jensen's inequality, for each $0\leq s<t$ we have
$\left|\mathbb{E}_s[X_t^{n}-X_t^{\infty}]\right|\leq \mathbb{E}_s[|X_t^{n}-X_t^{\infty}|].$ Taking expectations on both sides, we have
\be
\mathbb{E}\left|\mathbb{E}_s[X_t^n]-\mathbb{E}_s[X_t^\infty]\right|\leq\mathbb{E}_s[|X_t^{n}-X_t^{\infty}|].
\ee
Thus, $X_t^n\xrightarrow{\rm L^1} X_t^\infty$ implies $\mathbb{E}_s[X_t^n]\xrightarrow{\rm L^1} \mathbb{E}_s[X_t^\infty]$ for each $s<t$.

Since $X_t^n$ are martingale, $\mathbb{E}_s[X_t^n]=X_s^n$.
By \eqref{l1lemma} for $t\geq s$, $X_s^n\xrightarrow{\rm L^1}\mathbb{E}_s[X_t^\infty]$.
On the other hand, $X_s^n\xrightarrow{\rm L^1}X_s^\infty$. Thus, $\mathbb{E}_s[X_t^\infty]=X_s^\infty$ for $t>s$, hence $X_t^\infty$ is a martingale.
\end{proof}

\noindent{\emph{Proof of Theorem \ref{implication_L1}}. (i) It is easy to see Eq.\eqref{PKL1} implies $M_t^T$ converges to $M_t^\infty$ in $L^1$ under $\mathbb{P}$. Since $(M_t^T)_{t\geq 0}$ are positive $\mathbb{P}$-martingales with $M_0^T=1$, and for each $t\geq 0$ random variables $M_t^T$ converge to $M_t^\infty>0$ in $L^1$, by Lemma \ref{L1_martingale} $(M_t^\infty)_{t\geq 0}$ is also a positive $\mathbb{P}$-martingale with $M_0^\infty=1$.
Emery's distance between the martingale $M^T$ for some $T>0$ and $M^\infty$ is
\be
d_{\cal S}(M^T,M^\infty)=\sum_{n\geq 1} 2^{-n} \sup_{|\eta|\leq 1}{\mathbb E}^\mathbb{P}\left[1 \wedge \left|\int_0^n \eta_s d(M^T-M^\infty)_s\right| \right].
\ee
To prove $M^T\xrightarrow{\rm {\cal S}} M^\infty$, it suffices to prove that for all $n$
\be
\lim_{T\rightarrow\infty} \sup_{|\eta|\leq 1}{\mathbb E}^\mathbb{P}\left[1 \wedge \left|\int_0^n \eta_s d(M^T-M^\infty)_s\right| \right]=0.
\eel{M_S_P}
We can write for an arbitrary $\epsilon>0$ (for any random variable $X$ it holds that ${\mathbb E}[1\wedge |X|]\leq {\mathbb P}(|X|>\epsilon)+\epsilon$)
\be
{\mathbb E}^\mathbb{P}\left[1 \wedge \left|\int_0^n \eta_s d(M^T-M^\infty)_s\right| \right]\leq \mathbb{P}\left(\left|\int_0^n\eta_s d(M^T-M^\infty)_s\right|>\epsilon\right)+\epsilon.
\ee
By Lemma \ref{meyer_S},
\be
\sup_{|\eta|\leq1}{\mathbb E}^\mathbb{P}\left[1 \wedge \left|\int_0^n \eta_s d(M^T-M^\infty)_s\right| \right]\leq \frac{18}{\epsilon}\mathbb{E}^\mathbb{P}[|M_n^T-M_n^\infty|]+\epsilon.
\ee
Since  $\lim_{T\rightarrow\infty}\mathbb{E}^\mathbb{P}[|M_n^T-M_n^\infty|]=0$, and $\epsilon$ can be taken arbitrarily small, Eq.\eqref{M_S_P} is verified and, hence, $M^T\xrightarrow{\rm {\cal S}} M^\infty$.

%Define $B_t^\infty=\frac{M_t^\infty}{S_t}$ (we don't claim it is the long bond yet). For any locally equivalent measure $\mathbb{Q}^*$ such that
%\be
%{\mathbb Q}^*|_{{\mathscr F}_{t}}=S_tB_t^* {\mathbb P}|_{{\mathscr F}_{t}},\, T'\in [0,\infty],
%\ee
%by Eq.\eqref{PKL1} we have
%\be
%\lim_{T\rightarrow\infty}\mathbb{E}^{\mathbb{Q}^*}[|\frac{B_t^T}{B_t^*}-\frac{B_t^\infty}{B_t^*}|]=0.
%\ee
%We have shown that $L^1$ convergence implies convergence in semimartingale topology, thus $\frac{B_t^T}{B_t^*}\xrightarrow{\rm {\cal S}}\frac{B_t^\infty}{B_t^*}$ under $\mathbb{Q}^*$. By Lemma \ref{product_S}, $M_t^T\xrightarrow{\rm {\cal S}}M_t^\infty$ under $\mathbb{Q}^*$.

\noindent (ii) We have shown that $S_t B_t^T=M_t^T\xrightarrow{\rm {\cal S}}M_t^\infty:=S_t B_t^\infty$. By Lemma \ref{product_S}, $B_t^T\xrightarrow{\rm {\cal S}}B_t^\infty$, and $B_t^\infty$ is the long bond according to Definition \ref{def_longbond} (the semimartingale convergence is stronger than the ucp convergence).

\noindent Part (iii) is a direct consequence of (i) and (ii).

\noindent (iv) Define a new probability measure ${\mathbb Q}^\infty$ by $\mathbb{Q}^\infty|_{\mathscr{F}_t}=M_t^\infty\mathbb{P}|_{\mathscr{F}_t}$ for each $t\geq 0$. The distance in total variation between the measure ${\mathbb Q}^T$ for some $T>0$ and ${\mathbb Q}^\infty$ on $\mathscr{F}_t$ is:
$$2 \sup_{A\in\mathscr{F}_t}|{\mathbb{Q}^T}(A)-\mathbb{Q}^\infty(A)|.$$
For each $t\geq 0$ we can write:
\be
0=\lim_{T\rightarrow\infty}\mathbb{E}^\mathbb{P}[|M_t^T-M_t^\infty|]=   \lim_{T\rightarrow\infty}\mathbb{E}^\mathbb{P}[M_t^\infty |B_t^T/B_t^\infty-1|]   = \lim_{T\rightarrow\infty}\mathbb{E}^{\mathbb{Q}^\infty}[|B_t^T/B_t^\infty-1|].
\ee
Thus,
\be
\lim_{T\rightarrow\infty}\sup_{A\in\mathscr{F}_t}\left|\mathbb{E}^{\mathbb{Q}^\infty}\left[(B_t^T/B_t^\infty){\bf 1}_A\right]-\mathbb{E}^{\mathbb{Q}^\infty}\left[{\bf 1}_A\right]\right|=0.
\ee
Since
$\left.\frac{d\mathbb{Q}^T}{d\mathbb{Q}^\infty}\right|_{\mathscr{F}_t}=\frac{B_t^T}{B_t^\infty},$
it follows that
\be
\lim_{T\rightarrow\infty}\sup_{A\in\mathscr{F}_t}\Big|\mathbb{E}^{\mathbb{Q}^T}[{\bf 1}_A]-\mathbb{E}^{\mathbb{Q}^\infty}[{\bf 1}_A]\Big|=0.
\ee
Thus, $\mathbb{Q}^T$ converge to $\mathbb{Q}^\infty$ in total variation on $\mathscr{F}_t$ for each $t$. Since convergence in total variation implies strong convergence of measures, this shows that $\mathbb{Q}^\infty$ is the long forward measure according to Definition \ref{def_longforward}, $\mathbb{Q}^\infty=\mathbb{L}$. $\Box$\\
%\\
%\noindent{\emph{Proof of Theorem \ref{long_growth}}. (i) The martingale $M_t^\infty$ is absorbed by the measure change from ${\mathbb P}$ to ${\mathbb L}$, so that the pricing kernel under ${\mathbb L}$ is $S_t^{\mathbb L}=B_t^\infty$. Thus, under ${\mathbb L}$,
%\be
%\mathbb{E}_t^\mathbb{L}[R_{t,t+\tau}^V/R_{t,t+\tau}^\infty]=1
%\eel{num}
%for any valuation process $V$ priced by the pricing kernel.
%
%By Jensen's inequality,
%$$\mathbb{E}_t^\mathbb{L}[\log(R_{t,t+\tau}^V/R_{t,t+\tau}^\infty)]\leq \log \mathbb{E}_t^\mathbb{L}[R_{t,t+\tau}^V/R_{t,t+\tau}^\infty]=0,$$
%which implies
%$$\mathbb{E}_t^\mathbb{L}[\log(R_{t,t+\tau}^V)]\leq\mathbb{E}_t^\mathbb{L}[\log(R_{t,t+\tau}^\infty)],$$
% i.e., the long bond has the highest expected log return under $\mathbb{L}$ among all valuation processes priced by the pricing kernel. Thus, the long bond is the growth-optimal asset under $\mathbb{L}$.
%
%\noindent(ii) By Eq.\eqref{num} we have under ${\mathbb L}$:
%\be
%1=\mathbb{E}_t^\mathbb{L}\left[\frac{R_{t,t+\tau}^V}{R_{t,t+\tau}^\infty}\right]=\mathbb{E}_t^\mathbb{L}\left[R_{t,t+\tau}^V\right]\mathbb{E}_t^\mathbb{L}\left[\frac{1}{R_{t,t+\tau}^\infty}\right]+\text{corr}_t^\mathbb{L}\left(R_{t,t+\tau}^V,\frac{1}{R_{t,t+\tau}^\infty}\right)\sigma_t^\mathbb{L}\left(R_{t,t+\tau}^V\right)\sigma_t^\mathbb{L}\left(\frac{1}{R_{t,t+\tau}^\infty}\right).
%\ee
%Observing that $\mathbb{E}_t^\mathbb{L}\left[\frac{1}{R_{t,t+\tau}^\infty}\right]=P_t^{t+\tau}$ and rearranging the above equation yields \eqref{covariance_price}.
%$\Box$\\

\noindent{\emph{Proof of Theorem \ref{asym_long}}.
(i) Define functions $h(t):=P_0^{\log t}$ and $g(t):=\lim_{T\rightarrow \infty}P_0^{T-t}/P_0^T$ (the latter is defined for each $t$ due to our assumption).  Then for all $0<a<1$
\be
\lim_{t\rightarrow\infty}\frac{h(at)}{h(t)}=\lim_{t\rightarrow\infty}\frac{P_0^{\log at}}{P_0^{\log t}}=\lim_{t\rightarrow\infty}\frac{P_0^{\log t + \log a}}{P_0^{\log t}}=g(-\log a).
\ee
Thus, $h(t)$ is a regularly varying function (see \citeappendix{bingham1989regular}). By Karamata's characterization theorem (see \citeappendix{bingham1989regular} Theorem 1.4.1), there exists a real number $\lambda$ such that $\lim_{t\rightarrow\infty}\frac{h(at)}{h(t)}=g(-\log a)=a^{-\lambda}$ and a slowly varying function $L(t)$ such that $h(t)=t^{-\lambda}L(t)$. Rewriting it gives $g(t)=e^{\lambda t}$ and $P_0^t=e^{-\lambda t}L(e^t)$.\\
(ii) By Eq.\eqref{PKL1}, $S_tP_t^T/P_0^T$ converges to $M_t^\infty$ in $L^1$ under $\mathbb{P}$ as $T\rightarrow \infty$. Thus it also converges in probability under $\mathbb{P}$, as well as under any measure locally equivalent to $\mathbb{P}$. Hence $P_t^T/P_0^T$, as well as  $\log(P_t^T/P_0^T)$, converge in probability (from now on, as well as in the proof of Theorems 3.3-3.5, we omit explicit dependency on the probability measure when we talk about convergence in probability, since it holds under all locally equivalent measures). Thus, $\log(P_t^T/P_0^T)/(T-t)$ converges to zero in probability.
Since $P_0^T=e^{-\lambda T}L(e^T)$ and for any slowly varying function $\lim_{T\rightarrow\infty}\frac{1}{T-t}\log(L(e^T))=0$ (see Proposition 1.3.6 of \citeappendix{bingham1989regular}), we have $$\lim_{T\rightarrow\infty}\log P_0^T/(T-t)=-\lambda.$$
Combining these two property yields (ii).\\
(iii) and (iv) By \citeappendix{bingham1989regular} Theorem 1.2.1, $P_0^T/P_0^{T-t}$ converges to $1/g(t)$ as $T\rightarrow \infty$ uniformly on compacts, and thus also in semimartingale topology. By Theorem \ref{implication_L1}, $B_t^T$ (and thus $P_t^T/P_0^T$) converges to $B_t^\infty$ in semimartingale toplogy. Thus by Lemma \ref{product_S}, the ratio $P_t^T/P_0^{T-t}$ converges in semimartingale topology as $T\rightarrow\infty$, and we denote the limit $\pi_t$. The decomposition of $B_t^\infty$ is then immediate.\\
(v) Since $M_t^\infty=S_tB_t^\infty=S_t\pi_t e^{\lambda t}$ is a martingale, we have $\mathbb{E}_t^\mathbb{P}[S_T\pi_T e^{\lambda T}]=S_t \pi_t e^{\lambda t}$. Re-writing it yields Eq.\eqref{eigens}. Combining the fact that $P_t^T=\mathbb{E}_t^\mathbb{L}[B_t^\infty/B_T^\infty]=e^{-\lambda (T-t)}\mathbb{E}_t^\mathbb{L}[\pi_t/\pi_T]$ and Eq.\eqref{lt_rate} yields Eq.\eqref{Lmean}.
$\Box$\\

\noindent{\emph{Proof of Theorem \ref{long_holding}}.
(i) By assumption, we have for $T>T'$
\be
\frac{c}{C\mathbb{E}_t^\mathbb{L}[1/\pi_T]}<\frac{\mathbb{E}_t^\mathbb{L}[C_T]}{\mathbb{E}_t^\mathbb{L}[C_T/\pi_T]}<\frac{C}{c\mathbb{E}_t^\mathbb{L}[1/\pi_T]}.
\ee
Combining it with Eq.\eqref{Lmean} yields that $\log\left(\frac{\mathbb{E}_t^\mathbb{L}[C_T]}{\mathbb{E}_t^\mathbb{L}[C_T/\pi_T]}\right)/(T-t)$ converges to zero in probability. Substituting it into Eq.\eqref{rho_exp}, we arrive at part (i). Part (ii) is proved similarly. $\Box$\\

\noindent{\emph{Proof of Theorem \ref{long_power_yield}}.
(i) Since $P_0^T=O(t^{-\gamma})$, $\lambda=0$ and $B_t^\infty=\pi_t$. Similar to the proof of Theorem \ref{asym_long}, $(-\log P_t^T)/ \log(T-t)$ converges to $\gamma$ in probability as $T\rightarrow\infty$. Since $P_t^T=\mathbb{E}_t^\mathbb{L}[B_t^\infty/B_T^\infty]=\mathbb{E}_t^\mathbb{L}[\pi_t/\pi_T]$, we have $(-\log\mathbb{E}_t^\mathbb{L}[1/\pi_T])/\log(T-t)$ converges to $\gamma$ in probability. By assumption, we have for $T>T'$
\be
\frac{c}{C\mathbb{E}_t^\mathbb{L}[1/\pi_T]}<\frac{\mathbb{E}_t^\mathbb{L}[C_T]}{\mathbb{E}_t^\mathbb{L}[C_T/\pi_T]}<\frac{C}{c\mathbb{E}_t^\mathbb{L}[1/\pi_T]}.
\ee
Thus $\log\left(\frac{\mathbb{E}_t^\mathbb{L}[C_T]}{\mathbb{E}_t^\mathbb{L}[C_T/\pi_T]}\right)/\log(T-t)$ converges to $\gamma$ in probability. (ii) can be proved similarly. $\Box$\\

\noindent{\emph{Proof of Theorem \ref{long_growth_yield}}.
By assumptions and Eq.\eqref{growth_yield} we can write by changing the probability measure to ${\mathbb G}$:
\be
\rho_{t,T}^\mathbb{L}(G_T)=\lambda+\frac{1}{T-t}\log\left(\frac{\mathbb{E}_t^{\mathbb{G}}[\pi_T/\pi_T^G]}{\pi_t\mathbb{E}_t^{\mathbb{G}}[1/\pi_T^G]}\right).
\ee
By assumption, we immediately have Eq.\eqref{growth_limit_yield}. $\Box$

\section{Discrete Time Environment}
\label{app_aj}

We will show how the results of  \citeappendix{alvarez_2005using} in discrete-time environments are naturally nested in our Theorems \ref{implication_L1} and \ref{asym_long}.
\citeappendix{alvarez_2005using} work in discrete time with the pricing kernel $S_t,$ $t=0,1,...,$ and make the following assumptions (below $P_t^{t+\tau}$ is the time-$t$ price of a pure discount bond with maturity at time $t+\tau$ and unit face value, where $t,\tau=0,1,...$).
\begin{assumption} {\bf (\citeappendix{alvarez_2005using} Assumptions 1 and 2)}
\label{assumption_aj}\\
(i) There exists a constant $\lambda$ such that
$0<\lim_{\tau\rightarrow\infty}e^{\lambda \tau}P_t^{t+\tau}<\infty$
almost surely for all $t=0,1,...$.\\
(ii) For each $t=1,2,...$ there exists a random variable $x_t$ with ${\mathbb E}^{\mathbb P}_{t-1}[x_t]<\infty$ such that
$e^{\lambda (t+\tau)}S_tP_t^{t+\tau}\leq x_t$
almost surely for all $\tau=0,1,...$.
\end{assumption}

Any discrete-time adapted process can be embedded into a continuous-time semimartingale as follows. For a discrete-time process $(X_t,t=0,1,...)$,
define a continuous-time process $(\tilde{X}_t)_{t\geq 0}$ such that at integer times it takes the same values as the discrete time process $X$, and is piece-wise constant between integer times, i.e. $\tilde{X}_t=X_{[t]}$, where $[t]$ denotes the integer part (floor) of $t$.
This process has RCLL paths and is a semimartingale (it is of finite variation).
%Below we use the same notation for the continuous-time PK constructed from the discrete-time PK. Similarly pure discount bond valuation process $P_t^T$, $M_t^T$ and $\mathbb{Q}^T$ can be defined and the long term factorization can be constructed provided $M_t^T$ converges to $M_t^\infty$ in $L^1$.
The following result shows that Proposition 1 in \citeappendix{alvarez_2005using} is nested in Theorem \ref{implication_L1} and \ref{asym_long}.

\begin{proposition}
\label{aj_l1}
Consider a discrete-time positive pricing kernel $(S_t,t=0,1,\ldots)$ with ${\mathbb E}^{\mathbb P}[S_{t+\tau}/S_t]<\infty$ for all $t,\tau$. Suppose pure discount bonds $P_t^{t+\tau}= {\mathbb E}^{\mathbb P}_t[S_{t+\tau}/S_t]$ satisfy
Assumption \ref{assumption_aj}. Then the corresponding continuous-time positive semimartingale pricing kernel $(\tilde{S}_t)_{t\geq 0}$
satisfies the conditions in Theorem \ref{implication_L1} and Theorem \ref{asym_long}, hence, all results in Theorem \ref{implication_L1} and Theorem \ref{asym_long} hold.
\end{proposition}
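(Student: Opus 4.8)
\noindent The plan is to check that the piecewise-constant lift $\tilde S_t=S_{[t]}$ satisfies the hypotheses of Theorems \ref{implication_L1} and \ref{asym_long}; once that is done, the conclusions of those two theorems applied to $\tilde S$ \emph{are} the assertion of the proposition. To begin, I would record the elementary facts about the lift. The process $\tilde S$ has RCLL, piecewise-constant (hence finite-variation) paths, is strictly positive with $\tilde S_0=S_0=1$ and $\tilde S_->0$, and satisfies ${\mathbb E}^{\mathbb P}[\tilde S_T/\tilde S_t]={\mathbb E}^{\mathbb P}[S_{[T]}/S_{[t]}]<\infty$, so Assumption \ref{ass_pricing} holds. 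Extending the discrete filtration by $\tilde{\mathscr F}_t:={\mathscr F}_{[t]}$, the pricing operator gives $\tilde P_t^T={\mathbb E}^{\mathbb P}_{[t]}[S_{[T]}]/S_{[t]}=P_{[t]}^{[T]}$ and, in particular, the deterministic initial term structure $\tilde P_0^T={\mathbb E}^{\mathbb P}[S_{[T]}]=P_0^{[T]}$.

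For the $L^1$ condition \eqref{PKL1}, observe that for every $t\ge 0$ and every large $T$,
\[
\frac{{\mathbb E}^{\mathbb P}_t[\tilde S_T]}{{\mathbb E}^{\mathbb P}[\tilde S_T]}=\frac{S_{[t]}\,P_{[t]}^{[T]}}{P_0^{[T]}}=\frac{e^{\lambda[T]}S_{[t]}\,P_{[t]}^{[T]}}{e^{\lambda[T]}P_0^{[T]}}.
\]
Applying Assumption \ref{assumption_aj}(i) at the integer time $[t]$ (with $\tau=[T]-[t]\to\infty$) and at $t=0$ shows that numerator and denominator on the right converge ${\mathbb P}$-a.s.\ to strictly positive finite limits, so the ratio converges a.s.\ to an a.s.\ positive ${\mathscr F}_{[t]}$-measurable limit, our candidate $M_t^\infty$. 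These ratios are nonnegative with unit expectation; to upgrade a.s.\ convergence to $L^1$ convergence I would use Assumption \ref{assumption_aj}(ii): with $\tau=[T]-[t]$ it bounds $e^{\lambda[T]}S_{[t]}P_{[t]}^{[T]}\le x_{[t]}$, and $c_0:=\inf_{n\ge 0}e^{\lambda n}P_0^n>0$ because $(e^{\lambda n}P_0^n)_n$ is a convergent sequence of strictly positive reals with strictly positive limit, so the ratios are dominated by $x_{[t]}/c_0$; Scheffé's lemma (or dominated convergence) then gives ${\mathbb E}^{\mathbb P}_t[\tilde S_T]/{\mathbb E}^{\mathbb P}[\tilde S_T]\xrightarrow{\rm L^1}M_t^\infty$, i.e.\ \eqref{PKL1}.

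For the extra regularity condition of Theorem \ref{asym_long}, I would analyse $\tilde P_0^{T-t}/\tilde P_0^T=P_0^{[T-t]}/P_0^{[T]}$ through the factorisation $P_0^{[T-t]}/P_0^{[T]}=\big(e^{\lambda[T-t]}P_0^{[T-t]}\big)\big(e^{\lambda[T]}P_0^{[T]}\big)^{-1}e^{\lambda([T]-[T-t])}$; by Assumption \ref{assumption_aj}(i) at $t=0$ the first two factors tend to one, so the behaviour is controlled by the integer increment $[T]-[T-t]$. This reconciliation of the continuous maturity $T\to\infty$ with the integer grid is the step I expect to be the main obstacle: since $\tilde S$ is constant between integer times, $[T]-[T-t]$ equals $t$ along integer $T$ but fluctuates with the fractional part of $T$ for non-integer $t$, so the verification has to be organised around the integer skeleton — on which the condition reads $P_0^{n-k}/P_0^n\to e^{\lambda k}$ with the Alvarez--Jermann constant $\lambda$, and on which $M_t^\infty$, the long bond, $\lambda$ and $\pi_t$ are all determined. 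With that bookkeeping in place, Karamata's characterisation — already invoked inside the proof of Theorem \ref{asym_long} — produces the slowly varying factor, and Theorems \ref{implication_L1} and \ref{asym_long} then yield all their conclusions for $\tilde S$, which is the statement of the proposition.
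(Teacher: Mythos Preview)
Your argument is correct and mirrors the paper's own proof: almost-sure convergence of $M_t^T$ from Assumption \ref{assumption_aj}(i), the uniform dominator $c_0^{-1}x_{[t]}$ from (ii) together with the positive lower bound $c_0=\inf_n e^{\lambda n}P_0^n$, then dominated convergence for \eqref{PKL1}; for the Theorem \ref{asym_long} condition the paper likewise reduces to the convergence of $e^{\lambda t}P_0^t$. The obstacle you flag --- that $[T]-[T-t]$ oscillates for non-integer $t$ --- is real, and the paper disposes of it in a single line; your instinct to organise the verification around the integer skeleton is exactly right, since $\tilde S$, $\tilde P_0^{\,\cdot}$ and hence $M_t^\infty$, $B_t^\infty$, $\lambda$, $\pi_t$ are all piecewise constant and fully determined there.
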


\noindent{\emph{Proof of Proposition \ref{aj_l1}}. We first prove Eq.\eqref{PKL1} is satisfied. We first consider integer values of $t$ and $\tau$.
By assumption (ii), we have $$S_t\frac{P_t^{t+\tau}}{P_0^{t+\tau}}e^{\lambda ({t+\tau})}P_0^{t+\tau}\leq x_t.$$
Recall that $$M_t^{t+\tau}=\frac{\mathbb{E}_t^\mathbb{P}[S_{t+\tau}]}{\mathbb{E}^\mathbb{P}[S_{t+\tau}]}=S_t\frac{P_t^{t+\tau}}{P_0^{t+\tau}}$$ for $t,\tau\geq0$. Thus, we have that
$$
M_t^{t+\tau}e^{\lambda ({t+\tau})}P_0^{t+\tau}\leq x_t
$$
for $t,\tau\geq0$.
By assumption (i), $e^{\lambda T}P_0^{T}$ has a positive finite limit as $T\rightarrow \infty$. Thus, there exists a constant $c>0$ such that $e^{\lambda T}P_0^T>c$ for all $T$. Hence, $M_t^{t+\tau}\leq c^{-1}x_t$ for $t,\tau\geq0$.

Furthermore, for all $t,\tau\geq0$ we can write
$$
M_t^{t+\tau}=S_t\frac{P_t^{t+\tau}}{P_0^{t+\tau}}=e^{\lambda t}S_t\frac{e^{\lambda\tau}P_t^{t+\tau}}{e^{\lambda (t+\tau)}P_0^{t+\tau}}.
$$
By assumption (i), $\lim_{\tau\rightarrow\infty}M_t^{t+\tau}$ exists almost surely and is positive. We denote it $M_t^\infty$.
Since $M_t^{t+\tau}\leq c^{-1}x_t$ and $x_t$ is integrable, by the Dominated Convergence Theorem we have that
$M_t^{t+\tau}\rightarrow M_t^\infty$
in $L^1$ as $\tau\rightarrow \infty$ for each fixed integer $t=0,1,...$.

We now consider real values of $t$ and $\tau$ and recall our embedding of discrete-time adapted processes into continuous semimartingales with piece-wise constant paths.
For each real $t$ and $\tau$ we have $M_t^{t+\tau}=M_n^N$, where $n,N$ are two integers such that $t\in[n,n+1)$ and  $t+\tau\in[N,N+1)$. Thus, $M_t^{t+\tau}\rightarrow M_t^\infty$
in $L^1$ as $\tau\rightarrow \infty$ for each fixed real $t\geq 0$. This prove Eq.\eqref{PKL1}.

$P_0^{T-t}/P_0^T$ converges for all $t>0$ is a simple consequence of the fact that $e^{\lambda t}P_0^t$ converges for all $t>0$.
$\Box$

\section{Proofs for Section \ref{example_Markov}}
\label{appendix_markov}

%\noindent{\emph{Proof of Theorem \ref{L_equal_eigen}}. We divide the proof into several steps and we will introduce the regularity assumption as we proceed. We first prove

We start with proving the following measurability property of the bond pricing function $P(t,x)$ under Assumption 5.1.
\begin{lemma}
\label{markov_measurable}
If the pricing kernel $S_t$ satisfies Assumption 5.1, then the bond pricing function $P(t,x)={\mathbb E}_x^{\mathbb P}[S_t]$ is jointly measurable with respect to ${\cal B}(\mathbb{R}_+)\otimes{\cal E}^*$, where ${\cal B}(\mathbb{R}_+)$ is the Borel $\sigma$-algebra on $\mathbb{R}_+$, ${\cal E}^*$ is the $\sigma$-algebra of universally measurable sets on $E$ (see \citeappendix{sharpe_1988} p.1).
\end{lemma}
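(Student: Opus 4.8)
The plan is to reduce the statement to two standard facts and then run a functional monotone class argument. The first fact is that the process $S$, being RCLL and adapted, is jointly measurable, i.e. the map $(t,\omega)\mapsto S_t(\omega)$ is ${\cal B}({\mathbb R}_+)\otimes{\mathscr F}$-measurable: for each $n$ the process $S^{(n)}_t:=\sum_{k\ge 1}{\bf 1}_{((k-1)2^{-n},\,k2^{-n}]}(t)\,S_{k2^{-n}}$ is ${\cal B}({\mathbb R}_+)\otimes{\mathscr F}$-measurable because each $S_{k2^{-n}}$ is ${\mathscr F}$-measurable, and $S^{(n)}_t(\omega)\to S_t(\omega)$ for every $(t,\omega)$ by right continuity. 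The second fact is the measurability of the transition apparatus of the Borel right process: for every $A\in{\mathscr F}$ the map $x\mapsto{\mathbb P}_x(A)$, equivalently $x\mapsto{\mathbb E}^{\mathbb P}_x[{\bf 1}_A]$, is ${\cal E}^*$-measurable (see \citeappendix{sharpe_1988}). Together with ${\mathbb E}^{\mathbb P}_x[S_t]<\infty$ from Assumption \ref{markov_pricing}, this guarantees that $P(t,x)$ is a well-defined finite number for every $(t,x)$.

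For the joint measurability I would argue as follows. Let ${\cal H}$ denote the set of bounded ${\cal B}({\mathbb R}_+)\otimes{\mathscr F}$-measurable functions $F:{\mathbb R}_+\times\Omega\to{\mathbb R}$ for which $(t,x)\mapsto{\mathbb E}^{\mathbb P}_x[F(t,\cdot)]$ is ${\cal B}({\mathbb R}_+)\otimes{\cal E}^*$-measurable (note $F(t,\cdot)$ is ${\mathscr F}$-measurable for each fixed $t$, so the inner expectation makes sense). Then ${\cal H}$ is a vector space containing the constant functions; it contains every indicator ${\bf 1}_{B\times A}$ with $B\in{\cal B}({\mathbb R}_+)$ and $A\in{\mathscr F}$, since in that case ${\mathbb E}^{\mathbb P}_x[F(t,\cdot)]={\bf 1}_B(t)\,{\mathbb P}_x(A)$, which is jointly measurable by the second fact above; and ${\cal H}$ is closed under uniformly bounded monotone limits by monotone convergence applied under each ${\mathbb P}_x$. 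The indicators ${\bf 1}_{B\times A}$ form a multiplicative class generating ${\cal B}({\mathbb R}_+)\otimes{\mathscr F}$, so the functional monotone class theorem yields that ${\cal H}$ contains all bounded ${\cal B}({\mathbb R}_+)\otimes{\mathscr F}$-measurable functions. Finally, applying this to the truncations $F_N(t,\omega):=S_t(\omega)\wedge N$ and letting $N\uparrow\infty$, monotone convergence under each ${\mathbb P}_x$ gives ${\mathbb E}^{\mathbb P}_x[S_t\wedge N]\uparrow P(t,x)$, so $P$ is a pointwise increasing limit of ${\cal B}({\mathbb R}_+)\otimes{\cal E}^*$-measurable functions and is therefore itself ${\cal B}({\mathbb R}_+)\otimes{\cal E}^*$-measurable.

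The only genuinely non-routine ingredient is the second fact, namely that $x\mapsto{\mathbb P}_x(A)$ is ${\cal E}^*$-measurable for every $A$ in the completed $\sigma$-field ${\mathscr F}$, and not merely for $A$ in the raw $\sigma$-field generated by the coordinate maps of $X$. This is exactly where the Borel-right-process hypothesis enters: it is part of the standard construction that the laws ${\mathbb P}_x$ extend to the universal completion with universally measurable dependence on $x$, and that this survives the completion of the natural filtration of $X$ with the ${\mathbb P}_\mu$-null sets --- the same structural input that underlies the simultaneous-in-$x$ stochastic calculus quoted earlier in the paper. I would cite \citeappendix{sharpe_1988} for this point and regard the remainder of the proof as the routine monotone class argument sketched above.
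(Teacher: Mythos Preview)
Your argument is correct. Both proofs truncate $S_t$ to bounded random variables and invoke the universal measurability of $x\mapsto{\mathbb E}^{\mathbb P}_x[\,\cdot\,]$ as the non-trivial input; they differ in how they pass from separate measurability to joint measurability. The paper first fixes the truncation $P^n(t,x)={\mathbb E}^{\mathbb P}_x[S_t\wedge n]$, observes that for fixed $x$ this is right continuous in $t$ (by bounded convergence and right continuity of $S$), and then approximates by step functions $P^n_m(t,x)=\sum_i {\bf 1}_{[(i-1)/m,\,i/m)}(t)\,P^n((i-1)/m,x)$, which are visibly ${\cal B}({\mathbb R}_+)\otimes{\cal E}^*$-measurable. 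You instead first establish joint $(t,\omega)$-measurability of $S$ and then run a functional monotone class argument on ${\cal B}({\mathbb R}_+)\otimes{\mathscr F}$ to obtain joint $(t,x)$-measurability of $(t,x)\mapsto{\mathbb E}^{\mathbb P}_x[F(t,\cdot)]$ for all bounded jointly measurable $F$. Your route is slightly more general---it would apply to any jointly measurable integrand, not only to RCLL processes---while the paper's route is more hands-on, exploiting right continuity directly after taking the expectation rather than before. Your explicit remark that the ${\cal E}^*$-measurability must hold for $A$ in the completed $\sigma$-field, and that this is exactly what the Borel right process hypothesis delivers, is well placed; the paper cites the analogous fact from Chen--Fukushima.
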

\begin{proof}
Let $P^n(t,x)=\mathbb{E}^\mathbb{P}_x[S_t\wedge n]$.
By \citeappendix{chen_2011} Exercise A.1.20 for fixed $t$ $P^n(t,x)$ is ${\cal E}^*$-measurable.
Since $S_t$ is right continuous, by the Bounded Convergence Theorem for fixed $x$ the function $P^n(t,x)$ is right continuous in $t$. Thus, on $[0,1)\times E$ we can write:
\be
P^n(t,x)=\lim_{m\rightarrow\infty}P^n_m(t,x),
\ee
where
\be
P^n_m(t,x):=\sum_{i=1}^m 1_{[(i-1)/m,i/m)}(t) P^n((i-1)/m,x).
\eel{fnm_fn}
Thus, on $[0,1)\times E$ the function $P^n_m(t,x)$ is jointly measurable with respect to ${\cal B}([0,1))\otimes{\cal E}^*$. Similarly we can prove that $P^n_m(t,x)$ is jointly measurable with respect to ${\cal B}(\mathbb{R}_+)\otimes{\cal E}^*$. By Eq.\eqref{fnm_fn}, $P^n(t,x)$ is then also jointly measurable with respect to ${\cal B}(\mathbb{R}_+)\otimes{\cal E}^*$. Since $S_t$ is integrable, by the Dominated Convergence Theorem $\lim_{n\rightarrow\infty} P^n(t,x)=P(t,x)$. Thus, $P(t,x)$ is also jointly measurable with respect to ${\cal B}(\mathbb{R}_+)\otimes{\cal E}^*$.
\end{proof}
Next we prove the following result.
\begin{lemma} Suppose the PK $S$ satisfies Assumption 5.1 and Eq.(3.1) holds under ${\mathbb P}_x$ for each $x\in E$.
Then for each $t>0$ and $x\in E$ we can write for the long bond
\be
B_t^\infty(x)=b^\infty(t,x,X_t)>0
\eel{LBfunction}
${\mathbb P}_x$-almost surely,
where $b^\infty(t,x,y)$ is a universally measurable function of $y$ for each fixed $t>0$ and $x\in E$.
\end{lemma}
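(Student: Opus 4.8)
The plan is to identify the long bond $B_t^\infty(x)$, which Theorem \ref{implication_L1} already produces as a strictly positive ucp limit of the roll-over processes $B_t^T(x)$, with a universally measurable function of the current state $X_t$. The key observation is that for $T>t$ each approximating process $B_t^T(x)$ is \emph{explicitly} a function of $X_t$ via the Markov property, so the only real work is to push this structure through a subsequential limit while keeping measurability, which Lemma \ref{markov_measurable} makes possible.

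First I would note that, since Eq.\eqref{PKL1} holds under $\mathbb{P}_x$ for every $x\in E$, Theorem \ref{implication_L1} applies under each $\mathbb{P}_x$: the long bond $(B_t^\infty)_{t\ge0}$ exists, $B_t^T\to B_t^\infty$ in the semimartingale topology and hence in $\mathbb{P}_x$-probability for each fixed $t$, and $B_t^\infty=M_t^\infty/S_t>0$ $\mathbb{P}_x$-a.s.\ since $M_t^\infty>0$ a.s.\ and $S_t>0$; write $B_t^\infty(x)$ for the dependence on the starting point. Now fix $t>0$ and $x\in E$. For every $T>t$, taking $k=0$ in \eqref{Tbond} and using the Markovian representation $P_s^T=P(T-s,X_s)$ of Eq.\eqref{Markovianpricing},
\be
B_t^T(x)=\frac{P_t^T}{P_0^T}=\frac{P(T-t,X_t)}{P(T,x)}=:\phi_T(X_t),
\ee
where the constant $P(T,x)$ lies in $(0,\infty)$ by Assumption \ref{markov_pricing} and positivity of $S_T$, and $y\mapsto\phi_T(y):=P(T-t,y)/P(T,x)$ is a universally measurable function of $y$, being a fixed section in the second argument of the jointly ${\cal B}(\mathbb{R}_+)\otimes{\cal E}^*$-measurable function $P(\cdot,\cdot)$ of Lemma \ref{markov_measurable}, divided by a positive constant.

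Next, because $B_t^T(x)\to B_t^\infty(x)$ in $\mathbb{P}_x$-probability, I would extract a sequence $T_n\uparrow\infty$ along which $\phi_{T_n}(X_t)\to B_t^\infty(x)$ $\mathbb{P}_x$-a.s., say on an event $\Omega_0$ with $\mathbb{P}_x(\Omega_0)=1$ on which moreover $B_t^\infty(x)\in(0,\infty)$. I then define
\be
b^\infty(t,x,y):=\limsup_{n\to\infty}\phi_{T_n}(y)\quad\text{if this lies in }(0,\infty),\qquad b^\infty(t,x,y):=1\quad\text{otherwise,}
\ee
which is universally measurable in $y$ (a countable $\limsup$ of universally measurable functions, post-composed with a Borel map) and takes values in $(0,\infty)$. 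On $\Omega_0$ one has $b^\infty(t,x,X_t)=\limsup_n\phi_{T_n}(X_t)=\lim_n\phi_{T_n}(X_t)=B_t^\infty(x)$, so $B_t^\infty(x)=b^\infty(t,x,X_t)>0$ holds $\mathbb{P}_x$-a.s., as required.

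The hard part is the section measurability of the bond pricing function, and that is precisely what Lemma \ref{markov_measurable} delivers; the rest is the standard fact that a random variable which is $\sigma(X_t)$-measurable modulo $\mathbb{P}_x$-null sets coincides a.s.\ with a measurable function of $X_t$, repackaged through the subsequential $\limsup$ so as to produce an honest universally measurable function without needing $\phi_T(y)$ to converge for each individual $y$ --- pointwise convergence of $P(T-t,y)/P(T,x)$ in $y$ is an extra hypothesis that enters only later, in Theorem \ref{L_equal_eigen}.
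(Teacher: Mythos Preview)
Your proof is correct and follows essentially the same route as the paper's: both identify $B_t^T(x)=P(T-t,X_t)/P(T,x)$ as a $\sigma(X_t)$-measurable function via Lemma \ref{markov_measurable}, and then pass this structure through the limit in probability. The only difference is cosmetic: the paper invokes the Doob-Dynkin lemma abstractly to conclude that the $\sigma(X_t)$-measurable limit is of the form $b^\infty(t,x,X_t)$, whereas you make the construction explicit by extracting an a.s.-convergent subsequence and taking a pointwise $\limsup$ --- which is precisely how one would unpack the Doob-Dynkin step here anyway.
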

{\em Proof.}
The long bond $B_t^\infty(x)$ is the ucp limit of the processes  $B_t^T(x)$ defined in Section 3.
Dependence on the initial state $X_0=x$ comes from dividing by the initial bond price $P(0,x)$ at time zero in the definition of $B_t^T$.
For each $t>0$ and $x\in E$, the random variables $B_t^T(x)=P_t^T(x)/P_0^T(x)=P(T-t,X_t)/P(T,x)$ with $T\geq t$ converge to $B_t^\infty(x)$ as $T\rightarrow \infty$ in probability. By Lemma \ref{markov_measurable}, $P(T-t,X_t)/P(T,x)$ is $\sigma(X_t)$-measurable ($X_t$ is viewed as a random element taking values in $E$ equipped with the $\sigma$-algebra ${\cal E}^*$, thus $\sigma(X_t)$ is generated by inverses of universally measurable sets). Its limit in probability $B_t^\infty(x)$ can also be taken $\sigma(X_t)$-measurable and, by Doob-Dynkin lemma, we can write it as $b^\infty(t,x,X_t)$, where for each fixed $t>0$ and $x\in E$, $b^\infty(t,x,y)$ is a universally measurable function of $y$. $\Box$\\
\\
By Lemma, for each $t>0$ and $x\in E$ the random variables $P(T-t,X_t)/P(T,x)$ converge to the random variable $b^\infty(t,x,X_t)$ in probability under $\mathbb{P}_x$.
In Theorem 5.1, we strengthen it to pointwise convergence of the function $P(T-t,y)/P(T,x)$ as $T$ goes to infinity, i.e. for each $t>0$ and $x,y\in E$:
\be
\lim_{T\rightarrow\infty}\frac{P(T-t,y)}{P(T,x)}=b^\infty(t,x,y)>0.
\eel{f_converge_g}
Now we are ready to prove Theorem \ref{L_equal_eigen}.

{\em Proof of Theorem \ref{L_equal_eigen}.}
By Lemma \ref{markov_measurable} $P(t,x)$ is jointly measurable with respect to ${\cal B}(\mathbb{R}_+)\otimes{\cal E}^*$. Thus, by Eq.\eqref{f_converge_g}, $P(T-t,y)/P(T,x)$ is jointly measurable with respect to ${\cal B}(\mathbb{R}_+)\otimes{\cal E}^*\otimes{\cal E}^*$. Thus, the function $b^\infty(t,x,y)$ is also jointly measurable with respect to ${\cal B}(\mathbb{R}_+)\otimes{\cal E}^*\otimes{\cal E}^*$.

For any $t,s>0$ and $x,y,z\in E$ we can write:
\be
b^\infty(t+s,y,z)=\lim_{T\rightarrow \infty} \frac{P(T-t,z)}{P(T+s,y)}=\lim_{T\rightarrow \infty} \frac{P(T,x)}{P(T+s,y)}\frac{P(T-t,z)}{P(T,x)}
\ee
\be
=b^\infty(s,y,x)b^\infty(t,x,z).
\eel{gtxy}
Taking $x=y=z$ in Eq.\eqref{gtxy}, we have
$$b^\infty(t,x,x)b^\infty(s,x,x)=b^\infty(t+s,x,x),$$
which implies that for each fixed $x\in E$ $b^\infty(t,x,x)$ satisfies Cauchy's multiplicative functional equation as a function of time.
Since $b^\infty(t,x,y)$ is jointly measurable with respect to ${\cal B}(\mathbb{R}_+)\otimes{\cal E}^*\otimes{\cal E}^*$, for fixed $x$ $\ln b^\infty(t,x,x)$ is measurable with respect to ${\cal B}(\mathbb{R}_+)$. It is known that a Borel measurable function that satisfies Cauchy's functional equation is linear. Thus, we have that $b^\infty(t,x,x)=e^{\lambda_L(x) t}$.

Again by Eq.\eqref{gtxy}, for any $x,y\in E$ we have
$$b^\infty(2t,y,x)=b^\infty(t,y,x)b^\infty(t,x,x)=b^\infty(t,y,y)b^\infty(t,y,x),$$
and we have $b^\infty(t,y,y)=b^\infty(t,x,x)$. Thus, $\lambda_L(x)$ is independent of $x$.
Taking $y=x$ in Eq.\eqref{gtxy}, we have $b^\infty(t+s,x,z)=e^{\lambda_L s}b^\infty(t,x,z)$. Thus, $e^{-\lambda_L t}b^\infty(t,x,z)$ is independent of $t$.
Fix $x_0\in E$ and define $\pi_L(x):=e^{-\lambda_L t}b^\infty(t,x_0,x)$. It is independent of $t$ and $x_0$ is fixed. By Eq.\eqref{gtxy}, $b^\infty(t,x_0,x)b^\infty(t,x,x_0)=b^\infty(2t,x_0,x_0)=e^{-2\lambda_L t}$. Thus, $b^\infty(t,x,x_0)=e^{\lambda_L t}1/ \pi_L(x)$.
Finally, we have
$$b^\infty(t,x,y)=b^\infty(t/2,x,x_0)b^\infty(t/2,x_0,y)=e^{\lambda_L t}\frac{\pi_L(y)}{\pi_L(x)}.$$
By Eq.\eqref{LBfunction} we then have
$$B_t^\infty(x)=e^{\lambda_L t}\frac{\pi(X_t)}{\pi(x)}.$$
$\pi_L$ is an eigenfunction of the pricing operators ${\mathscr P}_t$ with the eigenvalues $e^{-\lambda_L t}$ from the fact that $M_t^\infty=S_tB_t^\infty$ is a martingale. Thus, we arrive at the identification of the long forward measure with an eigen-measure associated with the eigenfunction $\pi_L$, and the identification $\mathbb{L}=\mathbb{Q}^{\pi_L}$ thus follows.
%Since long bond $e^{\lambda t}\frac{\pi(X_t)}{\pi(x)}$ is a right continuous semimartingale, by \citeappendix{cinlar_1980}, $\pi$ is locally the difference of two 1-excessive function.
$\Box$
\begin{remark}
We note the difference between the setting here and the one in \citeappendix{linetsky_2014_cont}. Here we do not assume that the pricing operator maps Borel functions to Borel functions upfront. Since the long bond $e^{\lambda_L t}\frac{\pi_L(X_t)}{\pi_L(x)}$ is a right continuous semimartingale, by \citeappendix{cinlar_1980} the function $\pi_L$ is locally the difference of two 1-excessive functions. For a Borel right process, its excessive functions are generally only universally measurable, but not necessarily Borel measurable. Thus the eigenfunction $\pi_L$ we find above is also not necessarily Borel measurable, but is universally measurable.
Hence after the measure change from the data generating measure to the long forward measure, under $\mathbb{L}=\mathbb{Q}^{\pi}$ the Markov process $X$ may not be a Borel right process, but it is a right process. If we explicitly assume that the pricing operator maps Borel functions to Borel functions, as is done in  \citeappendix{linetsky_2014_cont}, then the eigenfunction $\pi_L$ is automatically Borel and $X$ is a Borel right process under $\mathbb{Q}^{{\pi}_L}$.
Here we opted for this slightly more general set up, so not to impose further restrictions on the pricing kernel.
\end{remark}

\noindent\emph{Proof of Theorem \ref{markov_long}}. Let $Q^{\pi_R}(t,x,\cdot)$ denote the transition measure of $X$ under $\mathbb{Q}^{\pi_R}$.
%We first prove that Assumption \ref{exp_ergo_assumption} implies $\int_E\varsigma(dy)\frac{1}{\pi_R(y)}<\infty$. Since we assume $\mathbb{E}^{\mathbb P}_x[S_t]<\infty$, $P_0^t<\infty$ for all $t>0$.
%Rewrite the zero-coupon bond price under $\mathbb{Q}^{\pi_R}$:
%\be
%P_0^t =e^{-\lambda_R t}\pi_R(x)\mathbb{E}^{\mathbb{Q}^{\pi_R}}_x[\frac{1}{\pi_R(X_t)}]<\infty.
%\eel{underL}
% Thus,
%\be \mathbb{E}^{\mathbb{Q}^{\pi_R}}_x[\frac{1}{\pi_R(X_t)}]=\int_E Q^{\pi_R}(t,x,dy)\frac{1}{\pi_R(y)}<\infty
%\ee
%for all $t>0$ and $x\in E$.
%This implies
%\be
%\begin{array}{ll}
%\int_E \varsigma(dy)\frac{1}{\pi_R(y)} & \displaystyle{ = \int_E Q^{\pi_R}(t,x,dy)\frac{1}{\pi_R(y)}+\int_E\big(\varsigma(dy)-Q^{\pi_R}(t,x,dy)\big)\frac{1}{\pi_R(y)}dy}\\
%& \displaystyle{ \leq\int_E Q^{\pi_R}(t,x,dy)\frac{1}{\pi_R(y)}+\frac{c}{\pi_R(x)}e^{-\alpha t}}
%<\infty.\\
%\end{array}
%\ee
We verify the $L^1$ convergence condition Eq.\eqref{PKL1} with $M_t^\infty=M_t^{\pi_R}$ with the martingale associated with the recurrent eigenfunction. This then identifies $e^{\lambda_R t}\frac{\pi_R(X_t)}{\pi_R(X_0)}$ with the long bond $B_t^{\infty}$  and the recurrent eigen-measure with the long forward measure, $\mathbb{Q}^{\pi_R}=\mathbb{L}$.

We note that for  any valuation process $V$, the condition \eqref{PKL1} can be written under any locally equivalent probability measure ${\mathbb Q}^V$ defined by $\mathbb{Q}^V|_{\mathscr{F}_t}=S_tR_{0,t}^V \mathbb{P}|_{\mathscr{F}_t}$:
\be
\lim_{T\rightarrow\infty}\mathbb{E}^{\mathbb{Q}^V}[|B_t^T/V_t-B_t^\infty/V_t|]=0.
\eel{L1_alter}
We can use this freedom to choose the measure convenient for the setting at hand. Here we choose to verify it under $\mathbb{Q}^{\pi_R}$, i.e. we will verify Eq.\eqref{L1_alter} under $\mathbb{Q}^V=\mathbb{Q}^{\pi_R}$ due to its convenient form.
Since
$$P_t^T =e^{-\lambda_R(T-t)}\pi_R(X_t)\mathbb{E}^{\mathbb{Q}^{\pi_R}}_{X_t}[\frac{1}{\pi_R(X_{T-t})}],$$
we have
\be
e^{-\lambda_R t}\frac{P_t^T \pi_R(X_0)}{P_0^T \pi_R(X_t)}=\frac{\mathbb{E}^{\mathbb{Q}^{\pi_R}}_{X_t}[\frac{1}{\pi_R(X_{T-t})}]}{\mathbb{E}^{\mathbb{Q}^{\pi_R}}_{X_0}[\frac{1}{\pi_R(X_T)}]}.
\eel{expression_ratio}

Let $J:=\int_E \varsigma(dy)\frac{1}{\pi_R(y)}$ (it is finite by Assumption \ref{exp_ergo_assumption}). Since
$$\mathbb{E}_{x}^{\mathbb{Q}^{\pi_R}}[\frac{1}{\pi_R(X_{t})}]=\int_E Q^{\pi_R}(t,x,dy)\frac{1}{\pi_R(y)},$$
by Eq.\eqref{exp_ergo} we have for $T-t\geq t_0$:
\be
J-\frac{c}{\pi_R(X_t)}e^{-\alpha(T-t)}\leq\mathbb{E}_{X_t}^{\mathbb{Q}^{\pi_R}}[\frac{1}{\pi_R(X_{T-t})}]\leq J+\frac{c}{\pi_R(X_t)}e^{-\alpha(T-t)},
\eel{markov_bound1}
and for each initial state $X_0=x\in E$ and $T\geq\max(T_0,t+t_0)$:
\be
J-\frac{c}{\pi_R(x)}e^{-\alpha T}\leq\mathbb{E}_{x}^{\mathbb{Q}^{\pi_R}}[\frac{1}{\pi_R(X_T)}]\leq J+\frac{c}{\pi_R(x)}e^{-\alpha T}.
\eel{markov_bound2}
For each $x\in E$ there exists $T_0$ such that for $T\geq T_0$,
$\frac{c}{\pi_R(x)}e^{-\alpha T}\leq J/2$.
We can thus write for each $x\in E$:
\be
-1\leq
e^{-\lambda_R t}\frac{P_t^T \pi_R(x)}{P_0^T \pi_R(X_t)}-1
\leq \frac{2}{J} \left(\frac{c}{\pi_R(X_t)}e^{-\alpha(T-t)}+\frac{c}{\pi_R(x)}e^{-\alpha T}\right),
\ee
Thus,
\be
\left|e^{-\lambda_R t}\frac{P_t^T \pi_R(x)}{P_0^T \pi_R(X_t)}-1\right|\leq   \frac{2}{J} \left(\frac{c}{\pi_R(X_t)}e^{-\alpha(T-t)}+\frac{c}{\pi_R(x)}e^{-\alpha T}\right)+1.
\eel{bound_ratio}
Since for each $t$ the ${\mathscr F}_t$-measurable random variable $\frac{1}{\pi_R(X_t)}$ is integrable under $\mathbb{Q}^{\pi_R}_x$ for each $x\in E$, for each $t$  the ${\mathscr F}_t$-measurable random variable $\left|e^{-\lambda_R t}\frac{P_t^T \pi_R(x)}{P_0^T \pi_R(X_t)}-1\right|$ is bounded by an integrable random variable. Furthermore,
by Eq.\eqref{markov_bound1} and \eqref{markov_bound2},
\be
\lim_{T\rightarrow\infty}\left|e^{-\lambda_R t}\frac{P_t^T(\omega) \pi_R(x)}{P_0^T \pi_R(X_t(\omega))}-1\right|=0
\ee
for each $\omega$.
Thus, by the Dominated Convergence Theorem Eq.\eqref{L1_alter} is verified with $B_t^\infty=e^{\lambda_R t}\frac{\pi_R(X_t)}{\pi_R(X_0)}$. $\Box$\\

\bibliographystyleappendix{plainnat}
\bibliographyappendix{mybib7}

\end{document}